\newtheorem{theorem}{Theorem}
\newtheorem{corollary}{Corollary}
\newtheorem{proposition}{Proposition}
\newtheorem{lemma}{Lemma}
\newcommand{\dsp}{\displaystyle}
\newcommand{\R}{\mathbb{R}}
\numberwithin{equation}{section}
\begin{document}
\title{On the magnetic shield for a Vlasov-Poisson plasma}
\author{Silvia Caprino$^1$, Guido Cavallaro$^2$ and Carlo Marchioro$^3$}

\maketitle

{\footnote{$^1$Dipartimento di Matematica, Universit\`a di Tor Vergata, 
Via della Ricerca Scientifica 1, 00133 Roma, Italy. E-mail:  caprino@mat.uniroma2.it}}

{\footnote{$^{2}$Dipartimento di Matematica, Sapienza Universit\`a di Roma, 
Piazzale Aldo Moro 2, 00185 Roma, Italy. E-mail:  cavallar@mat.uniroma1.it}}

{\footnote{$^{3}$International  Research Center M$\&$MOCS (Mathematics and Mechanics of Complex Systems). 
E-mail:  marchior@mat.uniroma1.it}}

\vspace{-50pt}

\begin{abstract}

We study  the screening of a bounded body $\Gamma$  against the effect of a wind of charged particles, by means
of a shield produced by a magnetic field which becomes infinite on the border of $\Gamma$. 
The charged wind is modeled by a Vlasov-Poisson plasma, the bounded body by a torus, and the external magnetic field is taken close to the border of $\Gamma$.

We study two models: a plasma composed by different species with positive or negative charges, and finite total mass of each species, and another made of many species of the same sign, each having infinite mass. We investigate the time evolution of both systems, showing in particular that the plasma particles cannot reach the body.  

Finally we discuss possible extensions to more general initial data. We show also that when the magnetic lines are straight lines, (that imposes an unbounded body), the previous results can be improved.

\end{abstract}

\medskip

\noindent Mathematics Subject Classification: {\small{Primary}}: 82D10, 35Q99, 76X05; \\
{\small{Secondary}}: 35L60.

\medskip
\noindent  Keywords: 
Magnetic shield, Vlasov-Poisson  equation, infinitely  extended  plasma, infinite velocities.

\section{Introduction and main results}
\medskip
It is well known that a magnetic field can be a shield against the effect of solar wind and cosmic rays. The most famous effect happens in the earth, in which its magnetic field pushes the cosmic rays towards the poles, causing the aurora borealis.
Analogously, one can pose the problem of how to protect a spaceship
flying to Mars from the solar wind. Indeed a prolonged exposure to the solar
wind can be dangerous for the health of the crew of the spaceship.

\noindent It is easy to show that, in the motion of a point charged particle,   surfaces with a magnetic field
singular enough represent  an impenetrable wall producing the so-called magnetic shield, as it can be seen in the proof of estimate (\ref{B}). 
In the present paper we will discuss whether a similar effect could happen for a charged plasma described by the Vlasov-Poisson equation, which evolves in presence of an external magnetic field singular on the border of some regions. Actually the problem is not obvious, since in the continuum case a fluid particle  could in principle be pushed 
against  the wall by the other particles, especially if the total mass of the plasma is infinite.
We will show rigorously that in many cases this phenomenon cannot happen. The construction
of such a model of  magnetic shield depends mainly on three different things: the geometry of the region that we want to screen by a magnetic field and the spatial as well as velocity distribution of the plasma. In the sequel we discuss some emblematic important cases. In other situations we are not able to obtain a proof of the existence of the shield effect, and we do not know if this is due to a technical lack or to physical reasons.

\bigskip

The solar wind is composed by different species of particles, i.e. electrons, protons, and other positive heavy ions,
 therefore we initially  discuss the time evolution of a plasma with many species with different sign charges, moving under the influence of the auto-induced electric field and a fixed external magnetic field. Then we study a non-trivial generalization of the initial data when only one species is present. 

Let $\Gamma$ be a smooth region of the space and $n$ the total number of the species. For any $i=1,2,...,n$ let $f_i(x,v,t)$ represent  the distribution function of charged particles at the point of the phase space $(x,v)$ at time $t$ and let  $\sigma_i$ (which can be positive or negative) be the charge per unit mass of the $i$-th species.
 We describe the time evolution of this system via the
$n$ Vlasov-Poisson equations:
\begin{equation}
\label{Eq}
\left\{
\begin{aligned}
&\partial_t f_i(x,v,t) +v\cdot \nabla_x f_i(x,v,t)+  \sigma_i(E(x,t)+v \wedge B(x)) \cdot \nabla_v f_i(x,v,t)=0  \\
&E(x,t)=\sum_{i=1}^n \sigma_i \int_{\R^3 \setminus \Gamma} \frac{x-y}{|x-y|^3} \ \rho_i(y,t) \, dy     \\
&\rho_i(x,t)=\int_{\R^3} f_i(x,v,t) \, dv \\
&f_i(x,v,0)=f_{i,0}(x,v)\geq 0,  \qquad  x\in {\R^3 \setminus \Gamma},  \qquad v\in\R^3,  \qquad i=1 \dots n
\end{aligned}
\right.
\end{equation} 
where  $\rho_i$ are the spatial densities of the species, $E$ is the electric field and $B(x)$ is an external magnetic field that is singular on the boundary $\partial \Gamma$.

\noindent The model described by these equations neglects the relativistic effects and this approximation is reasonable, being the velocities of the main part of the solar wind of the order  $4 \cdot 10^2$   Km/sec,  while the velocity of light
 is of the order   $3 \cdot 10^5$   Km/sec.

\bigskip

System (\ref{Eq}) shows that $f_i$ are time-invariant along the solutions of the so called characteristics equations:
\begin{equation}
\label{ch} 
\begin{cases}
\dsp  \dot{X}_i(t)= V_i(t)\\
\dsp  \dot{V}_i(t)= \sigma_i \left[ E(X(t),t)+V_i(t) \wedge B(X(t))\right] \\
\dsp (X_i(t'), V_i(t'))=(x,v)  \\
\dsp f_i(X_i(t), V_i(t), t) = f_{i,0}(x,v),
 \end{cases}
\end{equation}
where we have used the simplified notation
\begin{equation}
 \label{2.8}
(X_i(t),V_i(t))= (X_i(t,t',x,v),V_i(t,t',x,v)) 
 \end{equation}
 to represent a characteristic at time $t$ passing at time $t'<t$ through the point $(x,v)$. Hence we have
 \begin{equation}
 \label{2.9}
\| f_i(t)\|_{L^\infty}= \| f_{i,0} \|_{L^\infty}.
 \end{equation}
 Moreover this dynamical system preserves the measure of the phase space (Liouville's theorem).

\bigskip

We start the investigation when $\Gamma$ is a bounded region (which represents the spaceship).  For concreteness we assume $\Gamma$ to be a torus, but the study can be extended without difficulty to other smooth regions. We have chosen a torus and not, e.g., a sphere because we want to put on the border a magnetic field well defined everywhere (and for a sphere it is not defined at the poles), fact that imposes some constraints on the topology of $\Gamma$.

\bigskip

 Let $x=(x_1,x_2,x_3)\in \R^3$, then $x\in \Gamma$ if
 
 $$
\left(R-\sqrt{x_1^2+x_2^2} \right)^2+x_3^2=r^2,
$$
 for any $r\in [0,r_0]$ with $R>r_0>0$.

\bigskip

\noindent  In toroidal coordinates the equations are

\begin{equation}
\label{coord.}
\begin{cases}
\dsp  x_1= (R+r\cos \alpha) \cos \theta\\
\dsp x_2 =  (R+r\cos \alpha) \sin \theta \\
\dsp x_3 = r \sin \alpha \\
\dsp 0 \leq \alpha < 2 \pi, \quad  0 \leq \theta < 2 \pi.
 \end{cases}
\end{equation}

\noindent We will describe the region $\R^3 \setminus \Gamma$ with toroidal coordinates when $r_0<r<(R+r_0)/2$
(near the external border of the torus),
 with a smooth switch to cartesian coordinates elsewhere.

We choose the external magnetic field $B(x)$ of the form

\begin{equation}
\label{m.f.}
B(x) = \nabla \wedge A(x), \qquad  A(x)= \frac{a(r)}{R+r \cos \alpha} \hat{e}_\theta
\end{equation}
where $a(r)$ is a decreasing smooth function for $r_0 < r \leq r_0+\frac{R-r_0}{4}$, divergent for $r \to r_0^+$  and  vanishing otherwise, and  $\hat{e}_\theta$ is the unit vector tangent to the border of the torus in the direction of increasing $\theta$ (and fixed $\alpha$).

From (\ref{coord.}) one obtains

\begin{equation}
\label{versors}
\begin{cases}
\dsp   \hat{e}_r= \cos \alpha \cos \theta  \ \hat{c}_1+\cos \alpha \sin \theta \ \hat{c}_2+\sin \alpha \ \hat{c}_3\\
\dsp  \hat{e}_\theta=  - \sin \theta  \ \hat{c}_1 + \cos \theta \ \hat{c}_2\\
\dsp  \hat{e}_\alpha = -\sin \alpha \cos \theta \ \hat{c}_1-\sin \alpha \sin \theta \ \hat{c}_2+\cos \alpha  \ \hat{c}_3 
 \end{cases}
\end{equation}
where $\hat{e}_\alpha $ is the unit vector defined analogously to $\hat{e}_\theta$ (and so orthogonal to $\hat{e}_\theta$), $\hat{e}_r=\hat{e}_\theta \wedge \hat{e}_\alpha $, and $\hat{c}_1,\hat{c}_2,\hat{c}_3$ are the units vectors of the Cartesian axes $x_1,x_2,x_3$. From (\ref{coord.}) we have (for the curl in toroidal coordinates see for instance \cite {Bat})
\begin{equation}
\label{mag.f.}
B(x) = \frac{a'(r)}{R+r \cos \alpha} \hat{e}_\alpha.
\end{equation}
We make for convenience the   explicit choice
\begin{equation}
\label{a(r)}
a(r) = (r-r_0)^{-\tau}, \qquad  \tau >\frac72, \qquad \textnormal{for} \qquad r_0< r < r_0+ \frac{R-r_0}{8}.
\end{equation}

\medskip

We note that other  choices of the magnetic field are possible, as for example one directed along $\hat e_\theta$
and singular on the border of the torus.
We have considered the form (\ref{mag.f.})  which gives more difficulties in the analysis, while other choices can be studied
following the same lines of the present treatise. In any case, it will be clear in what follows that the magnetic field has to be chosen with a suitably strong divergence on the boundary of $\Gamma.$

 A remark, that will play an important role in the sequel, is the  fact the magnetic force $ V(t)\wedge B(X(t))$ does not change the modulus of the velocity, since
\begin{equation}
\begin{split}
\frac{d}{dt}V^2=2V\cdot \dot{V}= 2V\cdot (E+V\wedge B)=2V\cdot E.  \label{baa}
\end{split}
\end{equation} 

We denote by $\Lambda_i$ the spatial support of $f_{i,0}(x,v)$ for any $i=1,\dots, n.$ Moreover $\partial \Gamma$ represents the border of the torus and  $\Gamma^c=\mathbb{R}^3\setminus\Gamma.$  

\noindent One of the main results of the present paper is the following theorem, which is valid for charge densities
$\sigma_i$ of any sign (both positive and negative).

\begin{theorem}
Let us fix an arbitrary positive time T. Consider the initial data $f_{i,0} \in L^\infty$ such that $\Lambda_i\subset \Gamma^c\setminus \partial\Gamma$, with a distance between $\Lambda_i$ and $\Gamma$ greater than
$d_0>0$.
 Let  $f_{i,0}$ also  satisfy the following hypotheses:
\begin{equation}
0\leq f_{i,0}(x,v)\leq C_0 e^{- \lambda |v|^{q}}g(|x|), \qquad 
q>\frac{18}{7}
\label{Ga1}
 \end{equation}
 where  $g$ is a bounded, continuous, not increasing  function such that, for $|x|\geq 1$,
\begin{equation}
g(|x|) \leq \frac{C}{|x|^{\alpha}} \qquad \qquad \textit{with}\,\,\,\, \alpha>3
\label{asp1}
\end{equation}
  being $\lambda,$ $C_0$ and $C$  positive constants. Then $\forall (x,v)$   there exists a solution to equations (\ref{ch}) on $[0, T]$
  such that $X(t)\in \Gamma^c\setminus \partial\Gamma.$

Moreover there exist   positive constants $C'$ and ${\lambda}'$ such that  
 \begin{equation}
 0\leq f_i(x,v,t)\leq C'e^{- \lambda' |v|^{q}}g(|x|).
 \label{dec1}
 \end{equation}
This solution is unique in the class of those satisfying (\ref{dec1}).
\label{th_02}
\end{theorem}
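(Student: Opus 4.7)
\emph{Plan.} My approach is a fixed-point / iterative construction for the coupled Vlasov--Poisson--magnetic system on $[0,T]$, driven by a pointwise confinement estimate that exploits the axial symmetry of $\Gamma$ and of $B$. At step $k$ the electric field $E^{(k-1)}$ is given and I transport $f_{i,0}$ along the characteristics of the linear system (\ref{ch}) with that $E^{(k-1)}$ and the fixed magnetic field $B$; the field $E^{(k)}$ is then recomputed from the resulting densities $\rho_i^{(k)}$. For this iteration to be well defined one must first check that the characteristics remain in $\Gamma^c\setminus\partial\Gamma$ on $[0,T]$ and yield uniform bounds independent of $k$.

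\emph{Magnetic shield.} The core single-particle estimate is a lower bound on $r(t)-r_0$ for a characteristic evolving in a given smooth $E$. Using (\ref{coord.})--(\ref{mag.f.}) and the axial symmetry around the $x_3$-axis, the canonical angular momentum conjugate to $\theta$ reads
\[
P_\theta = (R+r\cos\alpha)^2\,\dot\theta + \sigma_i\,a(r),
\]
and $\dot P_\theta$ is driven only by $\partial_\theta$ of the scalar potential of $E$, so $|\dot P_\theta|\leq C(R+r)|E|$. By (\ref{baa}) the magnetic force preserves $|V|$, hence $|V(t)|\leq |v|+T\|E\|_\infty$, which in turn dominates $|(R+r\cos\alpha)^2\dot\theta|$. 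Combining the two gives a bound on $|\sigma_i\,a(r)|$ depending only on $\|E\|_\infty$, $|v|$, $T$ and $d_0$; inverting the explicit blow-up $a(r)\sim (r-r_0)^{-\tau}$ of (\ref{a(r)}) then produces the quantitative impenetrability $r(t)-r_0\geq\eta(\|E\|_\infty,|v|,T,d_0)>0$, which is the estimate referred to in the introduction as (\ref{B}).

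\emph{Bootstrap on the field and on the decay.} To close the loop I propagate the pointwise bound
\[
f_i^{(k)}(x,v,t)\leq C'\,e^{-\lambda'|v|^q}\,g(|x|)
\]
along the characteristics of step $k$. Tracing back from $(x,v,t)$, the reconstructed initial velocity satisfies $|V(0)|\geq|v|-T\|E\|_\infty$, so the Gaussian factor is preserved with a slightly reduced exponent $\lambda'<\lambda$ since $q>1$ makes $(|v|-cT)^q\geq(\lambda'/\lambda)|v|^q+O(|v|^{q-1})$; the backward displacement is bounded by $|v|T+\tfrac12\|E\|_\infty T^2$, and $g(|X(0)|)\leq C g(|x|)$ follows by splitting according to whether $|x|$ dominates the displacement, the unfavourable regime being absorbed by the velocity decay thanks to $\alpha>3$. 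Integrating in $v$ gives $\|\rho_i^{(k)}(t)\|_\infty\leq C'' g(|x|)$ and $\int\rho_i^{(k)}(y,t)\,dy<\infty$, and the standard interpolation
\[
|E^{(k)}(x,t)|\leq C\,\|\rho^{(k)}\|_\infty^{2/3}\Bigl(\int\rho^{(k)}(y,t)\,dy\Bigr)^{1/3}
\]
then produces a uniform-in-$k$ bound on $\|E^{(k)}\|_\infty$, closing the bootstrap.

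\emph{Convergence, uniqueness and main obstacle.} Equicontinuity of the characteristic flow (Lipschitz on compact subsets of $\Gamma^c\setminus\partial\Gamma$ thanks to the shield) together with the uniform decay lets me extract, via Ascoli, a limit which solves (\ref{ch}) and satisfies (\ref{dec1}). Uniqueness inside the class (\ref{dec1}) follows from a Dobrushin-type Gr\"onwall estimate on $|X^{(1)}(t)-X^{(2)}(t)|+|V^{(1)}(t)-V^{(2)}(t)|$, using that (\ref{dec1}) yields a (log-)Lipschitz $E$. The principal obstacle I foresee is the bookkeeping in the decay propagation: the multiplicative loss in $\lambda$ and in the spatial constant must remain summable over the iteration, and it is precisely in controlling the high velocity moments of $f_i$ that govern both $E$ and the time-regularity of the backward flow that the specific threshold $q>18/7$ of (\ref{Ga1}) is forced; the weaker condition $q>1$ used above to preserve the Gaussian factor would not by itself suffice to bound the relevant integrals uniformly on $[0,T]$.
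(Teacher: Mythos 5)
Your magnetic-shield computation is essentially the paper's: the quantity $(R+r\cos\alpha)^2\dot\theta+\sigma\,a(r)$, conserved up to the work of $E_\theta$, is exactly what the authors obtain by multiplying the $\theta$-equation of (\ref{tor_coor}) by $(R+r\cos\alpha)$ and integrating, and the divergence of $a$ at $r_0^+$ then forbids contact with $\partial\Gamma$. The fatal gap is in your bootstrap on the field. You claim that propagating the decay (\ref{dec1}) along the characteristics and interpolating $|E|\leq C\|\rho\|_{L^\infty}^{2/3}\|\rho\|_{L^1}^{1/3}$ ``produces a uniform-in-$k$ bound on $\|E^{(k)}\|_{L^\infty}$''. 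It does not: to propagate the velocity decay at step $k$ you need $\|E^{(k-1)}\|_{L^\infty}\leq M_{k-1}$, the resulting density bound is $\|\rho^{(k)}\|_{L^\infty}\leq C(1+TM_{k-1})^3$, and hence $M_k\leq C(1+M_{k-1})^2$ (or $C(1+M_{k-1})^{4/3}$ if you also exploit the conserved kinetic energy through $\|\rho\|_{L^{5/3}}$). This recursion is superlinear and has no a priori bounded orbit; closing it is precisely the classical difficulty of three-dimensional Vlasov--Poisson, and it is why the paper never obtains a pointwise-in-time bound on $E$. It proves instead only a bound on the \emph{time average} of $|E|$ along a characteristic (Propositions \ref{prop_1} and \ref{media}), of order $\mathcal{V}^{6/7}\log^2\mathcal{V}$, via a phase-space decomposition that uses the local energy $Q^N$, the fact that the magnetic force leaves $v_\alpha$ unchanged, and Lemmas \ref{lem2}--\ref{lem4}; inserted into $|V(t)|^2=|v|^2+2\int V\cdot E$, this closes because $1+\tfrac67<2$. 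Without this (or an equivalent Pfaffelmoser/Lions--Perthame-type argument) your iteration yields no uniform bounds, the shield constant $\eta$ degenerates, and the Ascoli extraction has nothing to act on.

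Two further points you would need to address even granting a field bound: (i) in the Cauchy/uniqueness step the magnetic field is not Lipschitz near $\partial\Gamma$ --- its gradient along the trajectories is only controlled through a power of the velocity cutoff $N$ via (\ref{B}), and the paper must beat the resulting factors $e^{CN^{\nu}}$ in the iterated Gr\"onwall estimate with the tail smallness $e^{-\lambda N^q/2}$; this is where the threshold $q>\tfrac{18}{7}$ (namely $q>3\gamma$ with $\gamma=\tfrac67$) actually enters, not in the moment bounds you allude to; (ii) your shield estimate depends on $\sup_{[0,T]}|V(t)|$ (through the boundary term $(R+r\cos\alpha)^2\dot\theta$), not only on $|v|$ and the field, so it must be run inside the truncated dynamics where $\mathcal{V}^N(T)\leq CN$ is available, with constants that degrade as $N\to\infty$.
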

The proof will be discussed in the next section.
We remark that the assumption on the  super-gaussian decay of the velocities (\ref{Ga1})  is related to the fact that the magnetic lines are not straight lines, which is a consequence of the boundedness of $\Gamma.$  When studying unbounded regions (see  Section \ref{General}) we are able to consider data having a gaussian decay in the velocities.

The spatial decay (\ref{asp1}) implies that the total masses are bounded:
\begin{equation}
\label{mass_i}
M_i = \int_{\R^3 \setminus \Gamma} \, dx \int_{\R^3}  \, dv \, f_{i,0}(x,v ) < \infty.
\end{equation}

\bigskip

We would like to study cases in which some species has infinite mass, in order  to put in evidence whether or not the results are dependent on the boundedness of $M_i.$ Unfortunately, we are able to do this investigation only if the charges of all the  species have the same sign (all positive or all negative). In the proof we will discuss why. 

The following theorem holds:

\begin{theorem}
Let us fix an arbitrary positive time T.   Assume that all $\sigma_i > 0$, or all $\sigma_i < 0$, $i=1,\dots, n$.
Consider the initial data $f_{i,0} \in L^\infty$ such that $\Lambda_i\subset \Gamma^c\setminus \partial\Gamma$, with a distance between $\Lambda_i$ and $\Gamma$ greater than
$d_0>0$.
 Let  $f_{i,0}$ also  satisfy the following hypotheses: 
\begin{equation}
0\leq f_{i,0}(x,v)\leq C_0 e^{- \lambda |v|^{q}}g(|x|), \qquad 
 q >  \frac{45}{7} - \frac97 \alpha  \quad \textit{with} \,\,\,\, \frac83<\alpha\leq 3
\label{Ga}
 \end{equation}
 where  $g$ is a bounded, continuous, not increasing  function such that, for $|x|\geq 1$,
\begin{equation}
g(|x|) \leq \frac{C}{|x|^{\alpha}}
\label{asp}
\end{equation}
  being $\lambda,$ $C_0$ and $C$  positive constants. Then $\forall (x,v)$ there exists a solution to equations (\ref{ch}) on $[0, T]$
  such that 
$X(t)\in \Gamma^c\setminus \partial\Gamma.$

Moreover there exist   positive constants $C'$ and ${\lambda}'$ such that  
 \begin{equation}
 0\leq f_i(x,v,t)\leq C'e^{- \lambda' |v|^{q}}g(|x|).
 \label{dec}
 \end{equation}
This solution is unique in the class of those satisfying (\ref{dec}).
\label{th_03}
\end{theorem}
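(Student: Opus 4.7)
The strategy I would adopt is to reduce Theorem \ref{th_03} to Theorem \ref{th_02} via a finite-mass approximation. For each $N\in\mathbb{N}$ define the cut-off initial data $f_{i,0}^N(x,v) := f_{i,0}(x,v)\,\chi_{\{|x|\le N\}}(x)$, which has finite spatial support and therefore finite mass, and falls under the hypotheses of Theorem \ref{th_02}. The resulting approximate solutions $(f_i^N, E^N)$ exist on $[0,T]$ with $X_i^N(t) \in \Gamma^c\setminus\partial\Gamma$ and satisfy (\ref{dec1}). The core of the proof is then to derive $N$-uniform a priori bounds so that one may pass to the limit $N\to\infty$ and recover a solution in the infinite-mass regime.

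The bootstrap couples three quantities: the maximal velocity
\begin{equation*}
\mathcal{V}^N(t) := \max_i \sup\{|v| : f_i^N(x,v,s) > 0 \text{ for some } s\in[0,t],\ x\in\Gamma^c\},
\end{equation*}
the density $\|\rho^N(t)\|_\infty$, and the spatial tail of $\rho_i^N(y,t)$ for large $|y|$. By (\ref{baa}) the magnetic force does no work on a particle, so $\mathcal{V}^N(t) \le \mathcal{V}^N(0) + \int_0^t \|E^N(s)\|_\infty\, ds$. To bound $\|E^N\|_\infty$ without invoking the total mass (which blows up as $N\to\infty$), one splits the Coulomb kernel at an optimized radius $R$ to obtain $\|E^N(t)\|_\infty \lesssim \|\rho^N(t)\|_\infty^{2/3}\, M^N_{\mathrm{loc}}(t)^{1/3}$, where $M^N_{\mathrm{loc}}$ is a local mass. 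This local mass is controlled by propagating the spatial decay along characteristics: since a particle at $(y,v)$ at time $t$ originated at distance at least $|y|-\mathcal{V}^N(t)\,t$ from the origin, monotonicity of $g$ yields
\begin{equation*}
\rho_i^N(y,t) \lesssim g\bigl(|y|-\mathcal{V}^N(t)\,t\bigr)
\end{equation*}
for $|y| > 2\mathcal{V}^N(t)\,t$, multiplied by a factor coming from the super-Gaussian velocity moment. Balancing the exponents that arise from these three estimates produces exactly the threshold $q > \tfrac{45}{7} - \tfrac{9}{7}\alpha$ stated in the hypothesis. The assumption that all $\sigma_i$ have the same sign enters twice: it makes the formal electrostatic energy sign-definite, which furnishes a local energy bound that controls $M^N_{\mathrm{loc}}$ independently of $N$; and it prevents sign-changing cancellations in the Vlasov difference equation used for uniqueness, so that the Gronwall step closes despite the infinite total mass.

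With the uniform bounds in hand, the magnetic-shield mechanism already established in the proof of Theorem \ref{th_02}—which depends only on the divergence of $a(r)$ as $r \to r_0^+$ and on an $L^\infty$ bound for $E^N$—continues to apply, and keeps $X_i^N(t)$ at a positive distance from $\partial\Gamma$ uniformly in $N$. A standard compactness argument then extracts a subsequence $(f_i^N, E^N)$ converging to a solution $(f_i, E)$ satisfying (\ref{dec}), and uniqueness within this class is obtained by comparing the characteristic flows of two solutions: the decay in (\ref{dec}) bounds $\|E^{(1)} - E^{(2)}\|_\infty$ in terms of the separation of the flows, and a Gronwall inequality forces coincidence. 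The principal difficulty is the bootstrap itself: simultaneously closing the coupled estimates for $\mathcal{V}^N(t)$, $\|\rho^N(t)\|_\infty$ and the propagated spatial tail at the sharp threshold $q > \tfrac{45}{7} - \tfrac{9}{7}\alpha$ with $\alpha > 8/3$ requires delicate bookkeeping of how the super-Gaussian velocity moment trades off against the polynomial spatial decay in each step.
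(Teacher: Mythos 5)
Your proposal has the right general shape (cut off, get $N$-uniform bounds, pass to the limit), but it misses the central analytic device of the paper, and the bootstrap as you describe it does not close. You propose to control $\mathcal{V}^N(t)$ through a pointwise-in-time bound $\|E^N(t)\|_\infty \lesssim \|\rho^N(t)\|_\infty^{2/3} M^N_{\mathrm{loc}}(t)^{1/3}$. The paper explicitly states that such a bound is not enough: the best available pointwise estimate is $\|E^N(t)\|_{\infty}\leq C\,[\mathcal{V}^N(t)]^{4/3}\,Q^N(R^N(t),t)^{1/3}$ (Proposition \ref{prop2}), and combined with $Q^N\lesssim [R^N]^{3-\alpha}\lesssim [\mathcal{V}^N]^{3-\alpha}$ this yields a differential inequality $\dot{\mathcal{V}}^N\lesssim [\mathcal{V}^N]^{(7-\alpha)/3}$ with exponent strictly greater than $1$, which gives only a short-time bound, not a bound on an arbitrary $[0,T]$. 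The missing idea is the \emph{time-averaged} estimate of the field along a characteristic, $\int_0^t|E^N(X^N(s),s)|\,ds\leq C[\mathcal{V}^N(t)]^{\gamma}\log^2\mathcal{V}^N(t)$ with $\gamma=\frac{15}{7}-\frac{3}{7}\alpha<1$ (Propositions \ref{prop_1} and \ref{media}), whose proof — the phase-space decomposition into the sets $S_1,S_2,S_3$ and $A_{h,k},B_{h,k}$, together with Lemmas \ref{lem2}--\ref{lem4} on the near-conservation of $v_\alpha$ and $v^\perp$ over the short interval $\Delta$ — occupies the entire Appendix and is the technical core of the theorem. Since $\gamma+1<2$, this sublinear exponent is exactly what lets $|V^N(t)|^2\leq N^2+C[\mathcal{V}^N]^{\gamma+1}\log^2\mathcal{V}^N$ close to give $\mathcal{V}^N(T)\leq CN$; the density bound $\rho^N\leq CN^{3\gamma}\log^6 N$ is itself a consequence of this averaged estimate, not an input to it. Your proposed tail propagation $\rho^N(y,t)\lesssim g(|y|-\mathcal{V}^N t)$ plays no role in the paper, which instead controls the far field through the local energy $Q^N$ and Proposition \ref{propo} (where, as you correctly note, the same-sign hypothesis enters through positivity of the potential).

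Two further discrepancies. First, the paper cuts off in \emph{velocity}, $f_0^N=f_0\,\chi(|v|\leq N)$, not in space, and this is not cosmetic: the limit $N\to\infty$ is carried out by proving the quantitative Cauchy estimate $\sigma^N(t)\leq C^{-N^c}$ for consecutive partial dynamics, and the iteration only closes because the difference $f_0^N-f_0^{N+1}$ contributes a super-Gaussian factor $e^{-\lambda N^q}$ (the term $\mathcal{I}_2$) that beats the Gronwall growth $e^{CN^\nu}$; a spatial cutoff would leave only a polynomial tail $N^{-\alpha}$, which cannot dominate $e^{CN^{\nu}}$, so a "standard compactness argument" would not recover either the convergence of the nonlinear term in this quantitative form or uniqueness. (Reducing to Theorem \ref{th_02} is also circular here, since the paper derives Theorem \ref{th_02} from the machinery of Theorem \ref{th_03}.) Second, the threshold $q>\frac{45}{7}-\frac{9}{7}\alpha$ does not come from balancing exponents in the velocity/density/tail bootstrap, but precisely from this iteration: one needs $q>3\gamma$ so that a $\nu$ with $\frac{3\gamma+q}{2}<\nu<q$ exists and $e^{-\lambda N^q/2}e^{CN^{\nu}}\to 0$. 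Your account of where the hypothesis on $q$ is used is therefore not the correct mechanism.
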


The proof will be discussed in the next section.
We remark that the lower bound on $\alpha$ in (\ref{Ga}) is due to technical reasons, i.e. to control the tail of the velocities  distribution
in the iterative method of subsection \ref{subs_conv}, while the upper bound is due to allow for infinite mass. The border case
$\alpha=3$   should be considered apart, since some estimates change slightly (as (\ref{Q^})), but it is
simpler than $\alpha<3$ and will not be investigated explicitly.

\bigskip

\noindent The Vlasov-Poisson equation has been widely studied in several papers. The problem of existence and uniqueness of the solution in ${\mathbb{R}}^3$, for $L^1$ data, is completely solved since many years (see \cite{L, Pf, S,W} and a nice review of many such results in \cite{G}). 
For the one species case it has been studied also the case of spatial density not belonging to $L^1$ to outline that the results  do not depend on the finite total mass. In particular  it has been first investigated a system with simplified bounded interaction  \cite{CCMP} and more recently for coulomb interaction (or coulomb like interaction) in \cite{CCM2, Rem, R3,inf,CCM16}. Other papers related to these problems, with many different optics, are    \cite{Ca, Ch, ChZ,  DMS, GJP,  J, Lo,  MMP_11, Pa1,  Pa2, P,  Sal09,  S1, S2, S3} to quote some of them. 
In particular in some papers the confinement of the plasma by a singular magnetic field is discussed \cite {CCM12,CCM2,Rem, CCM15, inf}.  In \cite {CCM12} and  \cite {CCM15} we have discussed the confinement in
an infinite cylinder and in a torus of a plasma with bounded total mass and velocities, in \cite {CCM2,Rem} the confinement in an infinite cylinder of a plasma always with bounded velocities but now with infinite mass,  while  \cite {inf} deals with the confinement in an infinite cylinder of a plasma with infinite mass and unbounded velocities.

\noindent In the present paper we discuss the opposite problem, that is how to forbid the entrance of an external plasma
inside a fixed region by means of a singular magnetic field. Thus, we conjugate the problems of confinement and infinite mass in the whole of $\mathbb{R}^3$ while, in the preceding articles \cite {CCM2,Rem} and \cite {inf} ,  we dealt separately with confined plasmas in a cylinder (an almost one-dimensional system) or free plasmas in the space. For this reason we will borrow some of the techniques already used in both cases, adapting them to the actual problem and repeating some of them to make the present paper self-contained. On the other side,  we  discuss in more detail some aspects that we have not yet investigated.

\noindent The novelties with respect to the preceding literature consist in the following aspects. First, in Theorem \ref{th_02}
we study a system with different charge sign species. The analysis is done only in case of finite total mass, since we are not able to deal
with infinite masses, by the lack of some properties of the local energy (\ref{e}).
Secondly, in Theorem \ref{th_03} we deal with a one species infinite mass plasma with a curved obstacle, whose geometry leads to some difficulties, in particular in the proof of the estimate of the electric field, which is given in Proposition \ref{prop_1}.

\noindent The plan of the paper is the following: in Section 2 we prove Theorem \ref{th_02}
and Theorem \ref{th_03},   Section \ref{General} is devoted to some generalizations, and 
in the Appendix we put some technical proofs.

\section{Proofs}

\bigskip

 \subsection{Some comments on the Theorems}

In Theorem \ref{th_02} the $n$ species have charges of different sign, and this difficulty imposes to study
initial data with finite total mass, while in Theorem \ref{th_03} the $n$ species have charges of the same sign,
and this simplification permits to extend the hypothesis to considering initial data with infinite mass. Moreover
in this second case the fact that the species are $n$ or only one  does not differ substantially, and we will
investigate explicitly the case of a single species.
In both cases the strategy of the proof relies on the introduction of a truncated dynamics, denoted as \textit{partial dynamics}, in which we assume the initial data $f_{i,0}$ having compact support in the velocities, that is initially
$|v|\leq N$. We investigate this case and then we allow $N\to \infty$.

 \bigskip
 
 We premise a remark about the case considered in Theorem \ref{th_02}.
 We put
 $$
f(x,v,t)=\sum_{i=1}^n f_i(x,v,t)   \qquad \textnormal{and} \qquad
\rho(x,t) = \sum_{i=1}^n \sigma_i \, \rho_i(x,t).
$$
  The total energy of the plasma is
 \begin{equation}
 \frac12 \int d x \int d v \, |v|^2 f( x, v,t) +\frac12 \int dx \,
 \rho( x,t)\int dy \, 
\frac{ \rho ( y,t)}{  | x- y|}  = C, 
\label{finite_energy}
\end{equation}
where the first integral is the kinetic energy and the second one the potential energy.
The total energy is finite (because of the finite total mass) and it is a constant of the motion. 
We want to recall the well known fact that
$$
 \int d x \int dy \, \rho(x,t) \, 
\rho ( y,t)  \frac{1}{| x- y|}  = \int dx \, |E(x, t)|^2,
$$
and hence the potential energy is positive, in spite of the fact that the spatial density $\rho$ is not definite in sign.
We put
$$
\Phi(x,t) = \int dy \, 
\rho ( y,t)  \frac{1}{| x- y|}. 
$$
Since $\rho(x,t) = \textnormal{div} E(x,t)$, we have
$$
 \int d x \int dy \, \rho(x,t) \, 
\rho ( y,t)  \frac{1}{| x- y|} =
\int dx \, \Phi (x,t) \, \textnormal{div} E(x,t) =
$$
$$
\int dx \big[ \textnormal{div}(E(x,t) \Phi(x,t))  - E(x,t) \cdot \nabla \Phi(x,t)   \big] =
$$
$$
\int dx \, \textnormal{div}(E(x,t) \Phi(x,t)) + \int dx \, |E(x,t)|^2.
$$
The first integral vanishes, as it can be seen performing the integral over a ball of radius $R$, and taking
the limit $R\to \infty$. Indeed, for the Gauss theorem  it is
$$
\int dx \, \textnormal{div}(E(x,t) \Phi(x,t))=\int (E(x,t)  \Phi(x,t)) \cdot \hat n \, dS
$$
where the right-hand integral is taken over the surface of the ball. If $\rho\in L^1$, $|E|$ goes to zero as $\frac{1}{|x|^2}$
for large $|x|$, and $ \Phi$ as $\frac{1}{|x|}$.  The surface grows as $R^2$, hence the integral goes to zero.
Consequently eq. (\ref{finite_energy}) becomes 
\begin{equation}
\frac12 \int d x \int d v \, |v|^2 f( x, v,t) +\frac12 \int dx \,
 |E(x,t)|^2  = C,
\end{equation}
and hence both potential  and kinetic energy are bounded and positive during the motion.
This consideration allows to treat the case contemplated in Theorem \ref{th_02} (species with charges of different sign)
along the same lines of Theorem \ref{th_03}. This last theorem allows for infinite mass,
which is difficult in the case of Theorem \ref{th_02}, since we are not able to prove Proposition {\ref{propo}}
(strongly based on the positivity of the potential), which is the key tool to control infinite mass.        
Therefore in the sequel we study explicitly  Theorem \ref{th_03}, because it is the more
complicated case.  
 During the proof we will see that the proof of Theorem \ref{th_02}  will follow from that of
Theorem \ref{th_03}, since we can bound the absolute value of the electric field acting on a characteristic
by the sum of the absolute values of the electric fields produced by the
different species; moreover in several points (as (\ref{K}), (\ref{5/3}),  (\ref{i5}), and the omitted proof of 
Lemma \ref{Lem_1}) we need to bound the kinetic energy of the plasma, which in case of Theorem {\ref{th_02}} is simply bounded by a constant, for what seen before.

\subsection{Partial dynamics and local energy in case of Theorem \ref{th_03}}

For what said before we consider hereafter the case of species with charges of the same sign. Moreover the number
of species will be assumed to be one, since all the main bounds proved in the sequel, as for example the fundamental bound of the electric field (\ref{E_field}),  are not affected by the number
of species (growing linearly in the number of species $n$, which is finite). Hence from now on we will drop the index of species $i=1,\dots, n$.

Beside system (\ref{ch}) we introduce a modified differential system, called \textit{partial dynamics}, in which the initial density has compact support in the velocities. More precisely, for any positive integer $N,$  we  consider the following equations:
\begin{equation}
\label{chN}
\begin{cases}
\dsp  \dot{X}^N(t)= V^N(t)\\
\dsp  \dot{V}^N(t)=  E^N(X^N(t),t) + V^N(t) \wedge B^N(X^N(t))  \\
\dsp ( X^N(t'), V^N(t'))=(x,v)  \\
\dsp  f^N(X^N(t),V^N(t),t)=f^N_0(x,v),
 \end{cases}
\end{equation}
where
\begin{equation*}
\left(X^N(t),V^N(t)\right)=   
\left((X^N(t,t',x,v),V^N(t,t',x,v)\right)
\end{equation*} 
\begin{equation*}
E^N(x,t)=\int \frac{x-y}{|x-y|^3} \ \rho^N(y,t) \, dy
\end{equation*} 
\begin{equation*}
\rho^N(x,t)=\int f^N(x,v,t) \, dv,
\end{equation*} 
being the initial distribution $f_0^N$ defined as
\begin{equation}
f^N_0(x,v)=f_0(x,v)\chi\left(|v|\leq N\right) \label{ic}
\end{equation}
with  $\chi\left(\cdot\right)$  the characteristic function of the set $\left(\cdot\right)$. 

\bigskip

 For a single species, the existence of the solution to (\ref{chN}) 
in case the spatial density satisfies $\rho(x, 0)\leq C |x|^{-\alpha},$ with $\alpha$ in a suitable
range to include infinite mass, follows from what already done in \cite{CCM2}
and \cite{Rem}.

In what follows some positive constants will appear, generally denoted by $C$
(changing from line to line).  All of them will depend exclusively on  the initial data and an arbitrarily fixed, once for ever, time $T$,  but not on $N$,
 while any dependence on $N$ in the estimates will be clearly stressed. 

\bigskip

We introduce the {\textit{local energy}}, which is  a fundamental tool to deal with the infinite charge of the plasma. 
 For any vector $\mu\in \mathbb{R}^3$ and any $R>1$ we define the function:
\begin{equation}
\varphi^{\mu,R}( x)=\varphi\Bigg(\frac{| x-\mu|}{R}\Bigg) \label{a}
\end{equation}
with $\varphi$ a smooth function such that:
\begin{equation}
\varphi(r)=1 \ \ \hbox{if} \ \ r\in[0,1] \label{b}
\end{equation}
\begin{equation}
\varphi(r)=0 \ \ \hbox{if} \ \ r\in[2,+\infty) \label{c}
\end{equation}
\begin{equation}
-2\leq \varphi'(r)\leq 0.\label{d}
\end{equation}
We define the local energy as:
\begin{equation}
\begin{split}
& W^N( \mu,R,t)=\\&\frac12 \int_{\R^3\setminus \Gamma} d x\, \varphi^{\mu,R}( x)\left[\int d v \ |v|^2 f^N( x, v,t)\,+
 \rho^N( x,t)\int_{\R^3\setminus \Gamma} dy \ 
\frac{ \rho^N ( y,t)}{  | x- y|}\right].
\end{split} \label{e}
\end{equation}
 It depends on the property of $f^N$ whether or not $W^N$ is bounded. For the moment, we stress that the local energy takes into account the complete interaction with the rest of the plasma out of the sphere of center $\mu$ and radius $2R.$

We set

\begin{equation}
Q^N(R,t)=\sup_{\mu\in {\mathbb {R}}^3 }W^N( \mu,R,t).\label{Q}
\end{equation}

By the definition of local energy, it is easy to prove, see for instance \cite{CCM16}, that
 
 \begin{lemma}
In the  hypotheses of Theorem \ref{th_03}, $\forall  R>0$ it holds
\begin{equation}
Q^N(R,0)\leq CR^{3-\alpha}.  \label{Q^}
\end{equation}
\label{Qbound}
\end{lemma}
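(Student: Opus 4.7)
The plan is to estimate the kinetic and potential contributions to $W^N(\mu,R,0)$ separately and then take the supremum over $\mu$. The very first step is to integrate out the velocity in (\ref{Ga}): the super-Gaussian factor makes $\int e^{-\lambda|v|^q}\,dv$ and $\int |v|^2 e^{-\lambda|v|^q}\,dv$ finite, so both $\rho^N(x,0)$ and $\int |v|^2 f_0^N(x,v)\,dv$ are bounded pointwise by $Cg(|x|)$, with $C$ independent of $N$ (the cutoff $\chi(|v|\le N)$ can only make them smaller). All remaining work is bookkeeping of three-dimensional integrals against $g(|x|)\le C|x|^{-\alpha}$ for $|x|\ge 1$ with $\frac83<\alpha<3$; the dynamics and the magnetic field play no role at $t=0$.

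The kinetic part reduces to $\int_{|x-\mu|\le 2R} g(|x|)\,dx$. I split on the position of $\mu$. If $|\mu|\le 4R$, the support lies in $\{|x|\le 6R\}$, and radial integration together with $\alpha<3$ and $R\ge 1$ gives the majorant $CR^{3-\alpha}$ (the unit ball contributes $O(1)$, absorbed since $R^{3-\alpha}\ge 1$). If $|\mu|>4R$, then $|x|\ge|\mu|/2$ on the support, so $g(|x|)\le C|\mu|^{-\alpha}$ and the integral is $\le CR^3|\mu|^{-\alpha}\le CR^{3-\alpha}$, using $|\mu|>4R$.

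For the potential part the crucial preliminary is a pointwise estimate on the Newtonian convolution $G(x):=\int g(|y|)/|x-y|\,dy$. Splitting the $y$-integration into $\{|y|\le|x|/2\}$, $\{|x|/2\le|y|\le 2|x|\}$, and $\{|y|\ge 2|x|\}$, the hypothesis $\alpha>2$ yields $G(x)\le C(1+|x|)^{2-\alpha}$. Hence $g(|x|)G(x)\le C(1+|x|)^{-(2\alpha-2)}$, which is integrable on $\R^3$ because $\alpha>\frac83$ implies $2\alpha-2>\frac{10}{3}>3$. The same near/far dichotomy on $\mu$ then closes the argument: for $|\mu|\le 4R$ one simply bounds the integrand by its $\R^3$-integral, a finite constant, itself $\le CR^{3-\alpha}$ since $R\ge 1$ and $3-\alpha>0$; for $|\mu|>4R$, one has $g(|x|)G(x)\le C|\mu|^{-(2\alpha-2)}$ on the support, so the integral is $\le CR^3|\mu|^{-(2\alpha-2)}\le CR^{5-2\alpha}\le CR^{3-\alpha}$, where the last inequality uses $\alpha>2$.

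The only delicate point is the pointwise estimate $G(x)\le C(1+|x|)^{2-\alpha}$ and the ensuing integrability of $g(|x|)G(x)$ on $\R^3$; this is exactly where the lower restriction on $\alpha$ matters (any $\alpha>\frac52$ would already suffice for this lemma in isolation, while the sharper bound $\alpha>\frac83$ of Theorem~\ref{th_03} is required elsewhere in the iteration). Everything else is an elementary radial estimate, and independence of $N$ is automatic because the velocity truncation only decreases the integrals.
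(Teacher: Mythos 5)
Your proof is correct, and it follows essentially the approach the paper has in mind: the paper does not prove Lemma \ref{Qbound} itself but defers to \cite{CCM16}, where the argument is of exactly this elementary type (integrate out the velocities against the super-Gaussian weight, then use radial estimates on $g$ and the pointwise bound on the Newtonian potential $\int g(|y|)|x-y|^{-1}\,dy$, with the near/far dichotomy in $\mu$). Note only that the cutoff $\varphi^{\mu,R}$ is defined for $R>1$, consistent with your standing assumption $R\geq 1$.
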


 Let us introduce the following functions, for any $t\in [0,T]:$ 
\begin{equation}
{\mathcal{V}}^N(t)=\max
\left\{ \widetilde C,  \sup_{s\in [0,t]}\sup_{(x,v)\in \mathbb{R}^3 \times b(N)}|V^N(x,v,s)| \right\}, \label{mv}
\end{equation}
\begin{equation}
R^N(t)= 1 +\int_0^t {\mathcal{V}}^N(s)\, ds,
\label{RN}
\end{equation}
where $\widetilde{C}>1$ is a positive constant chosen suitably large for further technical
reasons, 
and  $b(N)$ is the ball in $\mathbb{R}^3$ of center $0$ and radius $N.$

\medskip

 The main result we can prove on the local energy is stated in the following Proposition, whose proof is contained in \cite{CCM16} and not repeated here.

\begin{proposition}
In the  hypotheses of Theorem \ref{th_03}, for any $ t\in [0,T]$ it holds
\begin{equation*}
Q^N(R^N(t), t)\leq CQ^N(R^N(t),0).
\end{equation*}
\label{propo}
\end{proposition}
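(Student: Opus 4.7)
The plan is to differentiate $W^N(\mu,R,t)$ in $t$ at fixed $\mu,R$, use the Vlasov-Poisson structure together with the identity $v\cdot(v\wedge B)=0$ to reduce $\dot W^N$ to a surface-type flux, and finally specialize to $R=R^N(t)$ so as to close a Gronwall inequality. Writing $W^N=K^N+U^N$ and inserting the Vlasov equation into $\dot K^N$, integration by parts in $x$ and $v$ gives, thanks to (\ref{baa}), a transport piece $\tfrac12\int \nabla\varphi^{\mu,R}\cdot\bigl(\int v|v|^2 f^N\,dv\bigr)dx$ together with an electric work term $\sigma\int \varphi^{\mu,R}\,E^N\cdot j^N\,dx$, where $j^N=\int v f^N\,dv$.

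For the potential half, I would symmetrise $U^N=\tfrac14\iint[\varphi^{\mu,R}(x)+\varphi^{\mu,R}(y)]\,\rho^N(x)\rho^N(y)/|x-y|\,dx\,dy$ and use the continuity equation $\partial_t\rho^N=-\nabla\cdot j^N$. One integration by parts in $x$ produces a volume piece $-\sigma\int \varphi^{\mu,R}\, E^N\cdot j^N\,dx$ which cancels exactly the electric work term in $\dot K^N$, together with additional fluxes concentrated where $\nabla\varphi^{\mu,R}\neq 0$. Boundary contributions on $\partial\Gamma$ vanish because $j^N\cdot\hat n|_{\partial\Gamma}=0$ (no characteristic reaches the torus by the shield property), and the decay at infinity is provided by (\ref{asp}). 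What remains is
\begin{equation*}
\dot W^N(\mu,R,t)=\int\nabla\varphi^{\mu,R}(x)\cdot \mathcal{J}^N(x,t)\,dx,
\end{equation*}
where $\mathcal{J}^N$ combines the kinetic transport flux and a potential-flux term. Using $|\nabla\varphi^{\mu,R}|\leq 2/R$ supported in $A_{\mu,R}=\{R\leq|x-\mu|\leq 2R\}$, bounding $|v|\leq\mathcal V^N(t)$ in the $f^N$-integrals, and bounding the potential flux by Cauchy-Schwarz against the local kinetic and potential halves of $W^N(\cdot,2R,t)$, one obtains
\begin{equation*}
|\dot W^N(\mu,R,t)|\leq \frac{C\,\mathcal V^N(t)}{R}\,Q^N(2R,t),
\end{equation*}
and a covering argument gives $Q^N(2R,t)\leq C\,Q^N(R,t)$.

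Setting now $R=R^N(t)$ and using $\dot R^N=\mathcal V^N$, the chain-rule correction $\dot R^N\,\partial_R W^N\geq 0$ has the same annular structure as the boundary flux and obeys the same bound, so taking the supremum over $\mu$ yields
\begin{equation*}
\frac{d}{dt}Q^N(R^N(t),t)\leq \frac{C\,\mathcal V^N(t)}{R^N(t)}\,Q^N(R^N(t),t);
\end{equation*}
writing $\mathcal V^N/R^N=(d/dt)\log R^N$, the time-integral of the kernel is a controlled logarithm, and Gronwall gives the claim. The main obstacle is the bound on the potential part of $\mathcal{J}^N$: since $\Phi^N$ evaluated in $A_{\mu,R}$ is an integral over all of $\R^3\setminus\Gamma$, it is not directly of order one and must be split into a near contribution (from $|y-\mu|\leq 4R$, handled by $Q^N(2R,t)$) and a far contribution tamed by the spatial decay (\ref{asp}). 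A second delicate point is justifying the vanishing of the $\partial\Gamma$ boundary terms, which uses self-consistently the very shield property that the proposition is ultimately meant to support.
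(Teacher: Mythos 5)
The paper does not actually reproduce a proof of this proposition (it is quoted from \cite{CCM16}), so I am measuring your argument against what that proof must deliver. Your overall strategy --- differentiate the local energy, use $v\cdot(v\wedge B)=0$, cancel the electric work term of the kinetic part against the potential part, reduce to flux terms supported on the annulus $\{R\le |x-\mu|\le 2R\}$, and close with Gronwall --- is indeed the right one. The genuine gap is in the last step. From a differential inequality with the radius moving along with the time variable, $\frac{d}{dt}Q^N(R^N(t),t)\le C\,\mathcal V^N(t)\,Q^N(R^N(t),t)/R^N(t)$, Gronwall gives $Q^N(R^N(t),t)\le Q^N(1,0)\exp\bigl(C\log R^N(t)\bigr)=Q^N(1,0)\,[R^N(t)]^{C}$. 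Since $R^N(t)$ can be of order $N$ and the constant $C$ coming from the annulus estimates is not small (certainly not below $3-\alpha<\tfrac13$), this is strictly weaker than the claimed $C\,Q^N(R^N(t),0)\le C[R^N(t)]^{3-\alpha}$, and it would wreck (\ref{Q^N}) and hence (\ref{C1}), (\ref{E_field}) and the closure $\mathcal V^N(T)\le CN$. The "controlled logarithm" in the exponent is precisely what you cannot afford. The cure is to \emph{freeze} the radius: for a given target time $t$ set $\bar R=R^N(t)$ and differentiate $s\mapsto W^N(\mu,\bar R,s)$ on $[0,t]$. The flux terms are then bounded by $C\,\mathcal V^N(s)\,Q^N(\bar R,s)/\bar R$ (using Lemma \ref{lemR'/R} to compare the annulus energy with $Q^N(\bar R,s)$), and the Gronwall exponent becomes $C\bar R^{-1}\int_0^t\mathcal V^N(s)\,ds=C\,(R^N(t)-1)/R^N(t)\le C$, a constant independent of $N$ and $t$. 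This is exactly why $R^N$ is defined as $1+\int_0^t\mathcal V^N(s)\,ds$: the comparison on the right-hand side of the proposition is with the initial energy at the \emph{final} radius, not at the initial one.

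Two secondary points. First, the cancellation of $\sigma\int\varphi^{\mu,R}E^N\cdot j^N$ is not exact: after symmetrising $U^N$ the weight sits at $x$ in one term and at $y$ in the other, so a commutator $\iint\bigl[\varphi^{\mu,R}(x)-\varphi^{\mu,R}(y)\bigr]\,j^N(x)\cdot\frac{x-y}{|x-y|^3}\,\rho^N(y)\,dx\,dy$ survives; its kernel is $|x-y|^{-2}$, not $|x-y|^{-1}$, so it is not controlled by Cauchy--Schwarz against the potential half of $W^N$ but requires the interpolation bounds of Lemma \ref{Lem_1} and estimate (\ref{5/3}). Second, the vanishing of the boundary terms on $\partial\Gamma$ is not circular: for the partial dynamics the velocities are bounded by an $N$-dependent constant, so a crude a priori field bound already yields impenetrability via (\ref{lhs})--(\ref{rhs}) and keeps the support of $f^N(\cdot,\cdot,s)$ at positive distance from $\partial\Gamma$; the proposition is only needed to make the resulting constants uniform in $N$.
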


\subsection{Preliminary results}\label{subs_res}

As said before, we refer here to a single species, since the same reasoning in the following can be repeated
for each of the $n$ species.

\bigskip

We want to remove the cut-off $N$ on the velocities, by showing that the velocity of any characteristic can be bounded over the time interval $[0,T],$ independently of $N.$ A naive approach would be to look for a bound on the electric field valid for any time. Unfortunately we are not able to do this, but only to bound the time average of the electric field acting on a characteristic, as we will show in Propositions \ref{prop_1} and \ref{media},  and this will be enough to achieve the result.
Such estimates are improvements with respect to the analogous ones obtained in \cite{CCM15}.

\noindent A fundamental tool in the proof of Theorem \ref{th_03} is the achievement of the following estimate on the time integral
of the electric field,  along the same lines of \cite{CCM16}.
\begin{proposition}
In the hypothesis of Theorem \ref{th_02} (for $\alpha>3$) and Theorem \ref{th_03} (for $\alpha\leq 3$) respectively, we have
\begin{equation}
\int_0^t |E^N(X^N(s), s)|\, ds \leq C \mathcal{A}(\mathcal{V}^N(t)),
\label{E_field}
\end{equation}
where
\begin{equation}
\mathcal{A}(\mathcal{V}^N(t))=
\left\{
\begin{aligned}
&C \left[\mathcal{V}^N(t) \right]^{\frac67} \log^2 \mathcal{V}^N(t) \qquad \textit{for} \,\,\,\,\alpha>3 \\
&C \left[\mathcal{V}^N(t)\right]^{\frac67} \log^3 \mathcal{V}^N(t)   \qquad \textit{for} \,\,\,\,\alpha=3  \\
&C \left[\mathcal{V}^N(t)\right]^{\gamma} \log^2 \mathcal{V}^N(t)   \qquad \textit{for} \,\,\,\,\frac{8}{3}<\alpha<3 
\end{aligned}
\right.
\end{equation}
and $\gamma = \frac{15}{7} - \frac{3}{7}\alpha$.
\label{prop_1}
\end{proposition}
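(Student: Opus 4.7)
The plan is to estimate $|E^N(X^N(s),s)|$ via a decomposition of the Coulomb integral into a near, an intermediate and a far region, then integrate in $s\in[0,t]$ exploiting the motion of the characteristic (for the near piece) together with Proposition \ref{propo} (for the intermediate and far pieces). Writing
$$|E^N(X^N(s),s)| \leq \int_{\R^3\setminus\Gamma}\frac{\rho^N(y,s)}{|X^N(s)-y|^2}\,dy,$$
I would split the spatial domain into $\{|y-X^N(s)|<h\}$, $\{h\leq |y-X^N(s)|<H\}$ and $\{|y-X^N(s)|\geq H\}$, with cut-offs $h\ll H$ to be optimized as fractional powers of $\mathcal{V}^N(t)$ (times logarithms) at the end.

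The intermediate piece is handled by the classical pointwise bound $\rho^N(y,s)^{5/3}\leq C\,\|f_0\|_\infty^{2/3}\int |v|^2 f^N(y,v,s)\,dv$, combined with H\"older's inequality on the annulus and Proposition \ref{propo} applied at scale $R^N(t)$; using the nonnegativity of the potential energy in the definition \eqref{e}, this yields
$$\int_{|y-X^N(s)|<R^N(t)}\int |v|^2 f^N(y,v,s)\,dv\,dy \,\leq\, 2Q^N(R^N(t),s) \,\leq\, C\,R^N(t)^{3-\alpha},$$
bounded when $\alpha\geq 3$ and polynomially growing when $\alpha<3$. For the far region, when $\alpha>3$ the finite total mass gives directly $I_3\lesssim 1/H$, while for $\alpha\leq 3$ I would decompose into dyadic shells $\{2^k H\leq |y-X^N(s)|<2^{k+1}H\}$ and control each by Proposition \ref{propo} at scale $2^k H$ together with the spatial decay $g(|y|)\leq C|y|^{-\alpha}$; the condition $\alpha>\frac83$ is exactly what ensures the resulting geometric series closes with an exponent strictly less than one in $\mathcal{V}^N(t)$. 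The near piece is the most delicate: using the naive bound $\rho^N\leq C\mathcal{V}^N(t)^3$ would yield only $\mathcal{V}^{4/3}$ after optimization, so I would instead follow the Pfaffelmoser--Schaeffer strategy used in \cite{CCM16}, further splitting the ball according to the relative velocity $|v-V^N(s)|$ and using Liouville's theorem to change variables to the initial data, thus converting the space-time integral into one that tracks how long fast particles actually spend in the near ball; this is the step that gains the improved exponent $6/7$.

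Once $h$ and $H$ are balanced against $\mathcal{V}^N(t)$, the contributions combine to $[\mathcal{V}^N(t)]^{6/7}$ in the finite-mass regime, with $\log^2\mathcal{V}^N(t)$ factors coming from borderline logarithms in the intermediate annulus and in the velocity-splitting of the near piece; the borderline case $\alpha=3$ contributes an extra logarithm from the now logarithmically divergent zeroth dyadic shell, and in the infinite-mass range $\frac83<\alpha<3$ the polynomial growth $R^N(t)^{3-\alpha}$ of $Q^N$ degrades the exponent to $\gamma=\frac{15}{7}-\frac37\alpha$. The main obstacle I expect is the compatibility of the near-field velocity-splitting with the far-field dyadic estimate in the infinite-mass regime: the two contributions must balance so that the resulting time-integrated exponent remains $<1$, which is precisely why the threshold $\alpha>\frac83$ appears. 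The toroidal geometry of $\Gamma$ enters only mildly, since the integration is performed over $\R^3\setminus\Gamma$ and excising $\Gamma$ only decreases the mass seen by the kernel, so that the argument of \cite{CCM16} adapts with only minor bookkeeping near $\partial\Gamma$.
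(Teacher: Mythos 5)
Your overall architecture matches the paper's: the statement is reduced to a bound on the time average of $|E^N|$ over short intervals of length $\Delta$ (Proposition \ref{media}), the bulk of the plasma is controlled through the local energy $Q^N$ via Proposition \ref{propo} and the $L^{5/3}$ estimate (\ref{5/3}), the far field is handled by the decay of $Q^N(R,\cdot)$ in $R$, and the singular near part is treated by a Pfaffelmoser--Schaeffer splitting in relative velocities combined with a change of variables along the flow. You also correctly locate the origin of the three regimes in $\alpha$ and of the logarithms.

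There is, however, a genuine gap in exactly the step you flag as decisive. Splitting the near region ``according to the relative velocity $|v-V^N(s)|$'' and arguing that fast relative particles spend little time near the characteristic presupposes that the relative velocity is approximately conserved over the averaging interval. Here it is not: the magnetic field is of size up to $N^{2(\tau+1)/\tau}$ near $\partial\Gamma$ (estimate (\ref{B})) and rotates the $(r,\theta)$-components of the velocity arbitrarily fast, so two particles with large relative velocity at time $t'$ need not separate linearly in time. The paper's proof therefore works in toroidal coordinates and decomposes the phase space not by $|v-V^N(s)|$ but by the difference $|v_{1,\alpha}-v_{2,\alpha}|$ of the $\alpha$-components alone --- the single component on which $B$ does not act (Lemma \ref{lem2}) --- together with the magnitude $|v^\perp|$, whose modulus (though not its direction) is preserved by the magnetic force (Lemma \ref{lem3}); only the $\alpha$-difference produces the separation lower bound $|X(t)-Y(t)|\geq h\mathcal{V}^\eta|t-t_0|/8$ of Lemma \ref{lem4}. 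Because the set of small $\alpha$-velocity difference is thin in only one velocity direction, the marginal density obeys $\rho_1\leq C\mathcal{V}^{\eta/2}K^{1/2}$ rather than an $L^{5/3}$ bound, and an extra region ($S_2$, small $|v^\perp|$) must be treated separately. Moreover, the equation for $v_\alpha$ contains the centrifugal term $v_\theta^2\sin\alpha/(R+r\cos\alpha)\sim\mathcal{V}^2$, which forces the averaging interval to be as short as $\Delta\sim\mathcal{V}^{-(2-\eta)}/\log\mathcal{V}$ and is precisely what degrades the exponent to $6/7$ and requires $q>18/7$. So your closing remark that the toroidal geometry ``enters only mildly'' is the one point at which the proposal goes wrong: the curvature of the field lines is the main source of difficulty in this proposition, not a bookkeeping issue near $\partial\Gamma$.
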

\noindent We give the proof in the more complicated case $\frac{8}{3}<\alpha<3$, the other cases
being simpler.
The exponent $\gamma$ above is necessary in order to deal
with the exponential tail (\ref{dec})  of the velocities distribution.  

The proof of Proposition  \ref{prop_1} is a direct consequence of the following:

\begin{proposition}
Setting
\begin{equation*}
\langle E^N \rangle_{{\Delta}}
:= \frac{1}{{\Delta}} \int_{t}^{t+{\Delta}} |E^N(X(s), s)| \, ds,
\end{equation*}
there exists a positive number ${\Delta},$ depending on $N,$ such that:
\begin{equation}
\langle E^N \rangle_{{\Delta}} \leq
C {\cal{V}}^N(T)^{\gamma} \log^2({\mathcal{V}}^N(T)),
\label{averna}
\end{equation}
for any $t\in [0,T]$ such that $t\leq T- \Delta$.
\label{media}
\end{proposition}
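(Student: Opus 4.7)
The plan is to prove the average bound (\ref{averna}) by choosing the interval length $\Delta$ as a negative power of $\mathcal{V}^N(T)$ (to be optimized at the end) and, at each time $s\in[t,t+\Delta]$, decomposing the electric field integral
$$|E^N(X^N(s),s)| \leq \int \frac{\rho^N(y,s)}{|X^N(s)-y|^2}\, dy$$
into three pieces according to two spatial thresholds $a<b$: a near part ($|X^N(s)-y|<a$), a middle part ($a\le|X^N(s)-y|\le b$) and a far part ($|X^N(s)-y|>b$), with $a$ and $b$ also chosen as (respectively negative and positive) powers of $\mathcal{V}^N(T)$. Time-averaging each of the three pieces over $[t,t+\Delta]$ and summing yields the claimed estimate.

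For the near part I would use the crude bound $\rho^N(y,s)\leq C\|f_0\|_\infty \mathcal{V}^N(T)^3$, refined where needed by the interpolation $\rho\leq C\|f\|_\infty^{2/5}K^{3/5}$ with $K(y,s)=\int|w|^2 f^N(y,w,s)\,dw$ together with the kinetic-energy control coming from Proposition \ref{propo}; this yields a contribution of order $a\,\mathcal{V}^N(T)^3$ (up to the $K^{3/5}$ refinement). For the far part I would sum over dyadic shells around $X^N(s)$: the bound $W^N(\mu,R)\geq M(\mu,R)^2/(2R)$, where $M(\mu,R)$ denotes the mass in $B(\mu,R)$, combined with Proposition \ref{propo} gives $M(\mu,R)\leq CR^{2-\alpha/2}$ for $R\leq R^N(t)$, and the dyadic geometric series converges (since $\alpha>0$) to a quantity of order $b^{-\alpha/2}$.

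The heart of the proof, and where the exponent $\gamma=\frac{15}{7}-\frac{3}{7}\alpha$ and the $\log^2$ factor arise, is the middle contribution. I would rewrite its time-space integral as an integral over the initial phase-space variables via $f^N(y,w,s)=f_0^N(Y_0^N(s),W_0^N(s))$ and Liouville's theorem, and classify the contributing source trajectories by the size of the relative velocity $u(s):=V^N(s)-W^N(s)$ on $[t,t+\Delta]$. In the fast regime $|u|\geq\eta$, using (\ref{baa}) and the smallness of $\Delta$ in $N$, one approximates $s\mapsto X^N(s)-Y^N(s)$ by a straight line and makes the change of variable $s\mapsto X^N(s)-Y^N(s)$, picking up a Jacobian $1/|u|$ that makes $\int|X^N(s)-Y^N(s)|^{-2}\,ds$ integrable; the slow regime $|u|<\eta$ is controlled by the measure of initial velocities with relative speed smaller than $\eta$, which thanks to the super-Gaussian tail (\ref{Ga}) on $f_0$ contributes only a logarithmic factor. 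Choosing $\eta,a,b,\Delta$ as suitable powers of $\mathcal{V}^N(T)$ and balancing the four contributions produces the bound $C\,\mathcal{V}^N(T)^{\gamma}\log^2\mathcal{V}^N(T)$.

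The main obstacle I anticipate is twofold. Geometrically, the curved boundary of $\Gamma$ and the magnetic field $B$ singular on $\partial\Gamma$ invalidate the naive straight-line approximation of the characteristics used in flat-space Pfaffelmoser--Schaeffer arguments; one has to use (\ref{baa}), which guarantees $|V^N|$ is unaffected by $B$, and to check that on the short interval $[t,t+\Delta]$ the angular twist induced by $B$ is a small correction compared to the spatial scales $a,b$. Analytically, in the infinite-mass regime $\alpha\leq 3$ the local-energy bound only yields $M(\mu,R)\leq CR^{2-\alpha/2}$, much weaker than the $L^1$ bound available for $\alpha>3$, and this degrades the Pfaffelmoser exponent from $6/7$ to $\gamma=\frac{15}{7}-\frac{3}{7}\alpha$; the restriction $\alpha>8/3$ appears exactly as the condition keeping $\gamma<1$, which is what is ultimately needed to close the a priori bound on $\mathcal{V}^N(t)$ through $\dot V^N=E^N+V^N\wedge B^N$.
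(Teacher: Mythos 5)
Your overall strategy --- pick $\Delta$ as a negative power of $\mathcal{V}^N(T)$, split the field into near/middle/far spatial ranges, control the far range by local energy on dyadic shells, and handle the middle range by a Pfaffelmoser--Schaeffer time-average classifying source trajectories by relative velocity --- is the same general scheme as the paper's. But the core of your middle-range argument has a genuine gap. You propose, in the ``fast'' regime $|u|=|V^N-W^N|\geq\eta$, to approximate $s\mapsto X^N(s)-Y^N(s)$ by a straight line and change variables with Jacobian $1/|u|$, and you say one should ``check that the angular twist induced by $B$ is a small correction'' on $[t,t+\Delta]$. That check fails: the only a priori control on the field along a trajectory is $|B(X^N(t))|\leq CN^{2(\tau+1)/\tau}$ (estimate (\ref{B})), so the Larmor rotation accumulated over the interval is of order $|B|\,\Delta\sim N^{2/\tau+\eta}\to\infty$; the components of the relative velocity transverse to $B$ can rotate through many full turns during $[t,t+\Delta]$, and a large $|u|$ at time $t$ gives no linear lower bound on $|X^N(s)-Y^N(s)|$. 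This is precisely why the paper does \emph{not} classify by the full relative velocity: it classifies by the $\alpha$-component $|v_{1,\alpha}-v_{2,\alpha}|$ alone, i.e.\ the component parallel to $B$, on which the Lorentz force does not act (Lemmas \ref{lem2} and \ref{lem4}), converts the resulting separation in the $\alpha$-coordinate into a lower bound on the three-dimensional distance via the chord--arc inequality (\ref{cord}), and then must add a further bad set $S_2$ (source particles with small perpendicular speed $|v_2^\perp|$), which is handled by the $\rho^{5/3}$ local-energy estimate (\ref{5/3}). Without this one-dimensional reduction, or some equivalent device, your fast-regime estimate does not close.

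A secondary inaccuracy: your ``slow regime'' is not controlled by the super-Gaussian velocity tail of $f_0$ (that hypothesis is used elsewhere, in the convergence of the partial dynamics). In the paper the slow set constrains only one velocity component, and the integral over the remaining two components is paid for by the kinetic energy through the interpolation $\rho_1\leq C\mathcal{V}^{\eta/2}K^{1/2}$ (estimate (\ref{K})) combined with Proposition \ref{propo}; the logarithms come from the dyadic decomposition in $|v_2^\perp|$ and the harmonic sum over the index $h$, not from the tail of $f_0$. Your far-field dyadic argument and the balancing of exponents leading to $\gamma=\frac{15}{7}-\frac37\alpha$ are consistent with the paper's conclusion, but the proof as proposed is incomplete at its central step.
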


\medskip

We observe that
Proposition \ref{media} is sufficient to achieve the proof of Proposition \ref{prop_1},  which can be done by dividing the interval $[0,T]$ in $n$ subintervals $[t_{i-1},t_{i}]$,  $i=1,...,n$, with
$t_0=0$,  $t_n=T,$ such that ${\Delta}/2\leq t_{i-1}-t_{i}\leq {\Delta},$ and using Proposition \ref{media} on each of them.
We postpone the proof of Proposition \ref{media} to the Appendix.

As a consequence of Proposition \ref{prop_1},
the following holds:
\begin{corollary}
\begin{equation}
{\mathcal{V}}^N(T) \leq C N
\label{V^N}
\end{equation}
\begin{equation}
\rho^N(x, t) \leq C N^{3\gamma} \log^6 N
\label{rho_t}
\end{equation}
\begin{equation}
Q^N(R^N(t), t)\leq CN^{3-\alpha}
\label{Q^N}
\end{equation}
\begin{equation}
|B(X^N(t))|\leq CN^{\frac{2(\tau+1)}{\tau}},
\label{B}
\end{equation}
being $\gamma$ the exponent in Proposition \ref{prop_1}  and $\tau$ the one in eq. (\ref{a(r)}).
\label{coro}\end{corollary}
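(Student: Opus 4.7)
The plan is to deduce the four bounds in cascade, using Proposition~\ref{prop_1} as the main tool. First I would derive (\ref{V^N}) from the differential inequality $\frac{d}{dt}|V^N|\le |E^N|$, which follows directly from (\ref{baa}). Integrating along a characteristic of (\ref{chN}) and invoking Proposition~\ref{prop_1} gives $|V^N(t)|\le N+C\mathcal{A}(\mathcal{V}^N(t))$. Taking the supremum over $[0,T]$ and over all starting velocities $|v|\le N$ produces the self-consistent bound
\begin{equation*}
\mathcal{V}^N(T)\le \max\{\widetilde C,N\}+C[\mathcal{V}^N(T)]^\gamma\log^2\mathcal{V}^N(T).
\end{equation*}
Since $\gamma=\tfrac{15}{7}-\tfrac{3}{7}\alpha<1$ whenever $\alpha>\tfrac{8}{3}$ (and likewise $\tfrac{6}{7}<1$ in the regime $\alpha\ge 3$ of Theorem~\ref{th_02}), a bootstrap absorbs the sublinear correction and yields $\mathcal{V}^N(T)\le CN$. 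The bound (\ref{Q^N}) is then immediate: Proposition~\ref{propo} combined with Lemma~\ref{Qbound} gives $Q^N(R^N(t),t)\le C[R^N(t)]^{3-\alpha}$, and $R^N(t)=1+\int_0^t\mathcal{V}^N(s)\,ds\le 1+T\mathcal{V}^N(T)\le CN$.

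For (\ref{rho_t}) I would transport the super-gaussian initial profile along characteristics. The inequality $\tfrac{d}{dt}|V|\le |E|$ read backwards from time $t$ gives $|V^N(0,t,x,v)|\ge |v|-C\mathcal{A}(\mathcal{V}^N(T))\ge |v|-CN^\gamma\log^2 N$. Setting $M:=2CN^\gamma\log^2 N$, for $|v|>M$ one has $|V^N(0)|\ge |v|/2$ and hence $f^N(x,v,t)\le Ce^{-\lambda'|v|^q}g(|X^N(0)|)$. Splitting the velocity integral at $|v|=M$, the ball $|v|\le M$ contributes at most $\|f_0\|_\infty\cdot\tfrac{4\pi}{3}M^3\le CN^{3\gamma}\log^6 N$, while the tail $|v|>M$ of the super-gaussian is $O(1)$.

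The bound (\ref{B}) is the true magnetic shield effect and is the delicate step. Since the vector potential $A$ only has the $\hat e_\theta$--component $a(r)/(R+r\cos\alpha)$, the Lagrangian $L=\tfrac12|\dot x|^2+\sigma\dot x\cdot A-\sigma\phi$ is $\theta$-invariant apart from $\phi$; consequently the canonical angular momentum $p_\theta(t):=(R+r\cos\alpha)^2\dot\theta+\sigma a(r)$ evolves according to $\dot p_\theta=\sigma(R+r\cos\alpha)E_\theta$, so $|\dot p_\theta|\le C|E^N|$. Proposition~\ref{prop_1} then bounds the total variation of $p_\theta$ on $[0,T]$ by $C\mathcal{A}(\mathcal{V}^N(T))$, while the initial value $|p_\theta(0)|\le CN$ because $r(0)\ge r_0+d_0$ makes $a(r(0))=O(1)$ and $|V^N(0)|\le N$. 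Using in addition the bound $\mathcal{V}^N(T)\le CN$ to control the kinetic piece $(R+r\cos\alpha)^2|\dot\theta(t)|$ by $C\mathcal{V}^N(T)^2\le CN^2$, one isolates $|a(r(t))|\le CN^2$; then the explicit form (\ref{a(r)}) forces $r(t)-r_0\ge CN^{-2/\tau}$, and (\ref{mag.f.}) finally yields $|B(X^N(t))|=|a'(r)|/(R+r\cos\alpha)\le CN^{2(\tau+1)/\tau}$.

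The hardest part is precisely the matching carried out in (\ref{B}): one must identify the approximately conserved quantity adapted to the toroidal geometry of $\Gamma$, propagate it quadratically in $\mathcal{V}^N(T)$ through the kinetic piece, and then invert the singular profile of $a(r)$ near $\partial\Gamma$ to extract the exponent $2(\tau+1)/\tau$. The other three estimates are essentially corollaries of the bootstrap (\ref{V^N}) together with Propositions~\ref{prop_1} and~\ref{propo}.
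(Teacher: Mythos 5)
Your proposal is correct and follows essentially the same route as the paper: the bootstrap for $\mathcal{V}^N$ from (\ref{v^2}) and Proposition \ref{prop_1}, the velocity-space splitting at $\sim N^{\gamma}\log^2 N$ for $\rho^N$, Lemma \ref{Qbound} plus Proposition \ref{propo} for $Q^N$, and the control of $a(r(t))$ for the magnetic bound. Your canonical momentum $p_\theta=(R+r\cos\alpha)^2\dot\theta+\sigma a(r)$ is exactly the quantity the paper obtains by multiplying the $\theta$-equation of (\ref{tor_coor}) by $(R+r\cos\alpha)$ and integrating by parts in (\ref{rhs}), so the two arguments for (\ref{B}) coincide up to presentation (the paper bounds the field term via the pointwise estimate (\ref{C1}) with (\ref{Q^N}) rather than the time average, which is equally sufficient).
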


\begin{proof}
To prove (\ref{V^N}) we observe that the external Lorentz force does not affect the modulus of the particle velocities, being
\begin{equation}
\frac{d}{dt}\frac{|V^N(t)|^2}{2}=V^N(t) \cdot  E^N(X^N(t), t). \label{magn}
\end{equation}

Hence
\begin{equation}
|V^N(t)|^2=|v|^2+2\int_0^tV^N(s) \cdot  E^N(X^N(s), s)\,ds.\label{v^2}
\end{equation}

This fact, by Proposition \ref{prop_1} and the choice of the initial data such that $|v|\leq N,$ implies 
\begin{equation}
|V^N(t)|^2 \leq N^2+C \left[{\mathcal{V}}^N(t)\right]^{\gamma+1} \log^2({\mathcal{V}}^N(t)).
\end{equation} 
Hence, since $\gamma+1<2,$ by taking the $\sup_{t\in [0,T]}$ we obtain the thesis.

 Now we prove (\ref{rho_t}). Putting
 $$
 (\bar{x}, \bar{v})=\left(X^N(x,v,t),V^N(x,v,t)\right),$$
by using the invariance of the density along the characteristics we have 
\begin{equation*}
\rho^N(\bar{x}, t) =\int f(\bar{x}, \bar{v},t)\, d\bar{v}=\int f_0\left(x(\bar x, \bar v, t),v(\bar x, \bar v, t)\right)\, d\bar{v}.
\end{equation*}
We notice that, putting
$$
\widetilde{V}^N(t)=\sup_{0\leq s\leq t}|V^N(s)|,
$$
 from (\ref{v^2}), Proposition  \ref{prop_1} and (\ref{V^N}) it follows, for some positive constant $\bar{C}$
\begin{equation}
\begin{split}
|v|^2\geq |V^N(t)|^2- \bar{C} \widetilde{V}^N(t) N^{\gamma}\log^2 N.
\end{split}\label{VN}
\end{equation}
Hence, we decompose the integral as follows (writing for conciseness $x, v$ in place of $x(\bar x, \bar v, t),
v(\bar x, \bar v, t)$),
\begin{equation}
\begin{split}
&\rho^N(\bar{x}, t)\leq\\&\int_{|\bar v|\leq 2\bar{C} N^{\gamma}\log^2 N} f_0(x,v)\ d\bar{v}+C \int_{|\bar v|> 2\bar{C} N^{\gamma}\log^2 N} e^{-\lambda |v|^2}\ d\bar{v}\\& \leq   CN^{3\gamma}\log^6 N+  C \int_{|\bar v|> 2\bar{C} N^{\gamma}\log^2 N} e^{-\lambda |v|^2}\ d\bar{v}.   \label{f_0}
\end{split}
\end{equation}
Being by definition $|\bar{v}|\leq \widetilde{V}^N(t)$,
if $\widetilde{V}^N(t)\geq |\bar v| > 2\bar{C} N^{\gamma}\log^2 N$,  then by (\ref{VN}),
  $$|v|^2\geq |V^N(t)|^2-\frac{[ \widetilde{V}^N(t)]^2}{2}.$$
Since the inequality holds for any $t\in [0,T],$ it holds also at the time in which $V^N$ reaches its maximal value over $[0,t],$ that is 
$$
|v|^2 \geq  [\widetilde{V}^N(t)]^2-\frac{[ \widetilde{V}^N(t)]^2}{2}\geq \frac{| {V}^N(t)|^2}{2}=\frac{|\bar v|^2}{2}.$$
Hence from (\ref{f_0}) it follows
\begin{equation}
\begin{split}
&\rho^N(\bar{x}, t)\leq CN^{3\gamma}\log^6 N +C \int e^{-\frac{\lambda}{2}|\bar{v}|^2}\ d\bar{v} \leq CN^{3\gamma}\log^6 N.\end{split}
\end{equation}

To obtain (\ref{Q^N}) we simply combine (\ref{V^N}) with Lemma \ref{Qbound}  and Proposition \ref{propo}.

We prove (\ref{B}) by an  argument analogous to the one used in \cite{CCM15}.

\noindent We consider the time evolution of a single characteristic of the plasma in toroidal
coordinates (see (\ref{tor_coor})), in particular the second equation of that system,  
\begin{equation}
(R+r \cos\alpha) \ddot \theta + 2 \, \dot r \, \dot \theta   \cos\alpha 
-2 \, \dot\alpha \, \dot\theta \, r \sin\alpha =
E_\theta  - \frac{a'(r) \, \dot r}{R+r \cos\alpha},
\end{equation}
which, after multiplying by $(R+r \cos\alpha)$, becomes (following the notation in (\ref{tor_vel})),

\begin{equation}
a'(r) \dot r = -(R+r \cos\alpha)^2 \ddot \theta - 2 v_r v_\theta \cos\alpha
+ 2 v_\alpha v_\theta \sin\alpha + (R+r \cos\alpha) E_\theta.
\label{confin}
\end{equation}
Integrating in time (\ref{confin}), for the left hand side we have
\begin{equation}
\int_0^t a'(r(s)) \frac{dr}{ds} \, ds = a(r(t)) - a(r(0)),
\label{lhs}
\end{equation}
while for the right hand side we get
\begin{eqnarray}
&&\int_0^t \left( -(R+r \cos\alpha)^2 \ddot \theta - 2 v_r v_\theta \cos\alpha
+ 2 v_\alpha v_\theta \sin\alpha + (R+r \cos\alpha) E_\theta \right) ds
 \nonumber \\
&&= -\left[ (R+r \cos\alpha)^2 \dot \theta \right]_0^t 
+\int_0^t \left( 2 \dot\theta (R+r \cos\alpha) (\dot r \cos\alpha - r \dot\alpha \sin\alpha)    \right) ds \,   \nonumber  \\
&&\,\,\,\,\,\,+ \int_0^t \left( -2 v_r v_\theta \cos\alpha + 2 v_\alpha v_\theta \sin\alpha
+ (R+r \cos\alpha) E_\theta \right) ds.
\label{rhs}
\end{eqnarray}
As a by-product we have the impenetrability of the region $\Gamma$ for the partial dynamics, since (\ref{lhs}) diverges to $\infty$ for $r\to r_0^+$ (border of the torus), while (\ref{rhs})  stays finite.      
We have indeed, by the bound on the maximum velocity (\ref{V^N}), and by the a priori estimate on the electric field
(\ref{C1}) (together with (\ref{Q^N})),
$$
a(r(t)) \leq a(r(0)) + C N + C N^2.
$$ 
Since $a(r(0))$ depends only on the initial conditions, and it is finite by the assumption given in Theorem \ref{th_02}
and \ref{th_03}  (that the distance between $\Lambda_i$ and $\Gamma$ is greater than
$d_0>0$),    then by (\ref{mag.f.}) and the choice of the function $a(r)$ (\ref{a(r)}) we obtain (\ref{B}).

\end{proof}

\bigskip

Before going into the proof of Theorem \ref{th_03} we give some technical estimates, whose proofs can be found in \cite{CCM16}. 
\begin{proposition}
\begin{equation}
\|E^N(t)\|_{L^\infty}\leq C {\mathcal{V}}^N(t)^{\frac43}Q^N(R^N(t),t)^{\frac13}.
\label{C1}
\end{equation}
\label{prop2}\end{proposition}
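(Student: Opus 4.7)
The strategy is a Pfaffelmoser-type split of the Newtonian convolution, optimised to produce the exponents $4/3$ and $1/3$. Starting from $|E^N(x,t)|\leq\int_{\R^3\setminus\Gamma}\rho^N(y,t)|x-y|^{-2}\,dy$, I would fix $x$, introduce a scale $a>0$, and split the integration into the three regions $\{|y-x|<a\}$, $\{a\leq|y-x|\leq R^N(t)\}$ and $\{|y-x|>R^N(t)\}$.

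For the near part I would use the pointwise bound $\rho^N(y,t)\leq C\,\mathcal{V}^N(t)^3$, which follows from $\|f^N\|_{L^\infty}\leq\|f_0\|_{L^\infty}$ and the fact that the velocity support of $f^N(y,\cdot,t)$ is contained in a ball of radius $\mathcal{V}^N(t)$ (by (\ref{mv})); this region contributes at most $C\,\mathcal{V}^N(t)^3\,a$. For the intermediate part I would apply H\"older with exponents $(5/3,5/2)$, using $\int_{|y-x|>a}|y-x|^{-5}\,dy\leq C\,a^{-2}$, together with the classical interpolation $\rho^N(y,t)^{5/3}\leq C\|f^N\|_{L^\infty}^{2/3}\int|v|^2 f^N(y,v,t)\,dv$. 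Integrating over the ball $B(x,R^N(t))$ identifies the resulting quantity with the local kinetic energy in that ball, which by the very definition (\ref{Q}) applied with $\mu=x$ is dominated by $Q^N(R^N(t),t)$. This produces an intermediate contribution of order $Q^N(R^N(t),t)^{3/5}\,a^{-4/5}$. The tail beyond $R^N(t)$ is controlled using that any characteristic has displaced by at most $R^N(t)-1$ over $[0,T]$ thanks to (\ref{RN}): particles now at distance exceeding $R^N(t)$ were initially essentially as far from $x$, so the spatial decay (\ref{asp}) and the initial local-energy bound (\ref{Q^}) absorb this piece in the same order of magnitude as the intermediate one.

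Combining the three contributions gives $|E^N(x,t)|\leq C\bigl(\mathcal{V}^N(t)^3 a+Q^N(R^N(t),t)^{3/5}a^{-4/5}\bigr)$, and optimising by the choice $a=\mathcal{V}^N(t)^{-5/3}Q^N(R^N(t),t)^{1/3}$ yields exactly the announced bound $C\,\mathcal{V}^N(t)^{4/3}Q^N(R^N(t),t)^{1/3}$.

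The main technical nuisance is that with infinite total mass one has no global $L^1$ (and hence no global $L^{5/3}$) bound on $\rho^N$; the localisation of the $L^{5/3}$-interpolation to the ball $B(x,R^N(t))$, which is made legal by the supremum over centres $\mu$ in the definition of $Q^N$, is the crucial ingredient. Once this localisation is in place, the rest is routine balancing of scales.
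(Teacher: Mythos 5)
Your proposal is correct and follows essentially the same route as the proof the paper defers to \cite{CCM16}: the near/intermediate/far split, the pointwise bound $\rho^N\le C\,\mathcal{V}^N(t)^3$ from the velocity support, the $L^{5/3}$ interpolation (\ref{5/3}) against the local kinetic energy captured by $Q^N$, the far-field Lemma \ref{Lem_1'}, and the optimisation $a=\mathcal{V}^N(t)^{-5/3}Q^N(R^N(t),t)^{1/3}$ are precisely the ingredients the paper sets up for this purpose, and your exponent bookkeeping checks out. The only cosmetic adjustments are that the far zone should begin at $3R^N(t)$ so that Lemma \ref{Lem_1'} applies verbatim (the intermediate zone then extends to $3R^N(t)$, costing only a harmless factor via Lemma \ref{lemR'/R}), and that the resulting $O(1)$ tail contribution is absorbed into the stated bound since $\mathcal{V}^N(t)\ge\widetilde{C}$.
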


\begin{lemma}
\label{lemR'/R}
For any $\mu\in \mathbb{R}$ and for any couple of positive numbers $R<R'$ we have:
$$
W^N(\mu,R',t)\leq C \left(\frac{R'}{R}\right)^3 Q^N(R,t).
$$
\end{lemma}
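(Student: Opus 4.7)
The idea is a direct covering argument: bound the cutoff $\varphi^{\mu,R'}$ pointwise by a finite sum of cutoffs of the smaller scale $R$, so that $W^{N}(\mu,R',t)$ splits into a sum of (at most) $O((R'/R)^3)$ terms of the form $W^{N}(\mu_j,R,t)$, each dominated by $Q^{N}(R,t)$.

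More concretely, the plan is as follows. First, I would cover the ball $B(\mu,2R')$ (which contains $\supp\varphi^{\mu,R'}$, by (\ref{c})) by balls $B(\mu_j,R)$ whose centers $\{\mu_j\}_{j=1}^{M}$ lie on a cubic lattice of spacing $R$; a standard volume count gives $M\leq C(R'/R)^3$ for a universal constant $C$. Second, for every $x\in B(\mu,2R')$ there is at least one index $j$ with $|x-\mu_j|\leq R$, hence $\varphi^{\mu_j,R}(x)=1$ by (\ref{b}). Consequently
\begin{equation*}
\varphi^{\mu,R'}(x)\;\leq\;1\;\leq\;\sum_{j=1}^{M}\varphi^{\mu_j,R}(x)\qquad\text{for all }x\in\R^{3}.
\end{equation*}

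Third, I would use this pointwise inequality inside the definition (\ref{e}). The crucial observation is that, since we are in the setting of Theorem \ref{th_03} (single-sign plasma, so $\rho^{N}\geq 0$), both the kinetic integrand
\begin{equation*}
\int dv\,|v|^{2} f^{N}(x,v,t)
\end{equation*}
and the potential integrand
\begin{equation*}
\rho^{N}(x,t)\int_{\R^{3}\setminus\Gamma}dy\,\frac{\rho^{N}(y,t)}{|x-y|}
\end{equation*}
are non-negative pointwise in $x$. Therefore replacing $\varphi^{\mu,R'}(x)$ by $\sum_{j}\varphi^{\mu_j,R}(x)$ only enlarges the integral, and one may interchange the finite sum with the $dx$-integral to obtain
\begin{equation*}
W^{N}(\mu,R',t)\;\leq\;\sum_{j=1}^{M} W^{N}(\mu_j,R,t).
\end{equation*}

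Finally, bounding each summand by $Q^{N}(R,t)=\sup_{\mu}W^{N}(\mu,R,t)$ (definition (\ref{Q})) and using $M\leq C(R'/R)^{3}$ yields the claim. The argument is almost entirely geometric, and the one point that requires a moment's care is the non-negativity of the potential integrand: this is why the lemma is stated and used in the single-sign framework of Theorem \ref{th_03}, where $\rho^{N}\geq 0$ makes the covering step work pointwise rather than only in integrated form.
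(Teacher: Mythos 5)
Your proof is correct, and it is essentially the standard argument: the paper itself omits the proof of Lemma \ref{lemR'/R}, deferring to \cite{CCM16}, where the same covering strategy is used. Dominating $\varphi^{\mu,R'}$ pointwise by $O\left((R'/R)^3\right)$ translates of the cutoff at scale $R$ and exploiting the non-negativity of both the kinetic and potential integrands (the latter because $\rho^N\geq 0$ in the single-species setting where $W^N$ is employed) is exactly what is needed, and your volume count of the lattice centers is sound.
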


\begin{lemma}
For any $\mu\in \mathbb{R}^3$ and $R>0$ it holds
\begin{equation*}
\int_{1\leq |\mu-x|\leq R}\frac{\rho^N(x,t)}{|\mu-x|^2}\,dx\leq CQ^N(R,t)^{\frac35}.
\end{equation*}
\label{Lem_1}
\end{lemma}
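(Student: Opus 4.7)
The strategy is the classical Hardy-type/interpolation argument that relates $\rho^N$ to the local kinetic energy and then uses the definition of $W^N$ to absorb everything into $Q^N$.

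\medskip

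\noindent\emph{Step 1: Pointwise interpolation bound for $\rho^N$.} Writing $\kappa(x,t):=\int |v|^2 f^N(x,v,t)\,dv$, I would first establish the standard estimate
\begin{equation*}
\rho^N(x,t)\leq C\,\|f^N\|_{L^\infty}^{2/5}\,\kappa(x,t)^{3/5}\leq C\,\kappa(x,t)^{3/5},
\end{equation*}
where the second inequality uses $\|f^N(t)\|_{L^\infty}\leq\|f_{0}\|_{L^\infty}$ (cf.\ (\ref{2.9})). The derivation is the usual velocity-cutoff trick: for any $K>0$,
\begin{equation*}
\rho^N(x,t)=\int_{|v|\leq K}f^N\,dv+\int_{|v|>K}f^N\,dv\leq \frac{4\pi}{3}\|f^N\|_{L^\infty}K^{3}+\frac{1}{K^{2}}\kappa(x,t),
\end{equation*}
and optimization in $K$ (choosing $K^{5}\sim\kappa/\|f^N\|_{L^\infty}$) produces the exponent $3/5$.

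\medskip

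\noindent\emph{Step 2: H\"older splitting and the radial integral.} Inserting this bound into the left hand side of the claim and applying H\"older's inequality with conjugate exponents $(5/3,5/2)$ gives
\begin{equation*}
\int_{1\leq|\mu-x|\leq R}\frac{\rho^N(x,t)}{|\mu-x|^{2}}\,dx\leq C\left(\int_{|\mu-x|\leq R}\kappa(x,t)\,dx\right)^{\!3/5}\!\left(\int_{1\leq|\mu-x|\leq R}\frac{dx}{|\mu-x|^{5}}\right)^{\!2/5}.
\end{equation*}
The second factor is a one-dimensional radial integral $4\pi\int_{1}^{R}r^{-3}\,dr\leq 2\pi$, bounded uniformly in $R$. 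This is the whole point of restricting to $|\mu-x|\geq 1$: the singularity at the center is excised precisely so that $|\mu-x|^{-5}$ becomes integrable.

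\medskip

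\noindent\emph{Step 3: Identification with the local energy.} Since $\varphi^{\mu,R}(x)=1$ whenever $|\mu-x|\leq R$ (by (\ref{b})), the first factor is controlled by the kinetic part of $W^N$,
\begin{equation*}
\int_{|\mu-x|\leq R}\kappa(x,t)\,dx\leq 2\int_{\mathbb R^{3}\setminus\Gamma}\varphi^{\mu,R}(x)\kappa(x,t)\,dx\leq 2W^N(\mu,R,t)\leq 2Q^N(R,t),
\end{equation*}
where the first inequality uses positivity of the potential part of $W^N$ (the double integral is nonnegative, see the discussion around (\ref{finite_energy})) and the last is (\ref{Q}). Combining the three steps yields the claim with $C Q^N(R,t)^{3/5}$.

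\medskip

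\noindent\emph{Main obstacle.} There is no deep obstruction here; the only delicate choice is picking the H\"older exponents so that (i) the $|\mu-x|^{-2}$ integral converges on the annulus up to radius $R$ \emph{uniformly in $R$}, forcing the power on $|\mu-x|^{-1}$ to exceed $3$ after raising to $q$, and (ii) the resulting power on $\rho^N$ matches the interpolation exponent $3/5$. The pair $(5/3,5/2)$ is the unique one compatible with both constraints, which is why the exponent $3/5$ appears on the right hand side.
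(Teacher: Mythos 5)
Your argument is correct and is essentially the standard one that the paper defers to \cite{CCM16}: the pointwise interpolation $\rho^N\leq C\|f^N\|_{L^\infty}^{2/5}\kappa^{3/5}$ (equivalently the $L^{5/3}$ bound (\ref{5/3})), H\"older with exponents $(5/3,5/2)$ so that $|\mu-x|^{-5}$ is integrable on $\{|\mu-x|\geq 1\}$ uniformly in $R$, and absorption of the localized kinetic energy into $W^N\leq Q^N$ using positivity of the potential term. The only cosmetic slip is the placement of the factor $2$ in Step 3 (the correct chain is $\int_{|\mu-x|\leq R}\kappa\leq\int\varphi^{\mu,R}\kappa\leq 2W^N$, the $2$ coming from the $\tfrac12$ in the definition (\ref{e})), which does not affect the conclusion.
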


\begin{lemma}
Let $R^N(t)$ be defined in (\ref{RN}). Then for any $\mu\in \mathbb{R}^3$ it holds
  \begin{equation}
\int_{3R^N(t)\leq |\mu-x| }\frac{\rho^N(x,t)}{|\mu-x|^2}\,dx\leq C.
\end{equation}\label{Lem_1'}
\end{lemma}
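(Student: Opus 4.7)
My strategy would be to push the estimate back to the initial time via Liouville's theorem, and then exploit the spatial decay of $f_0$. Since, by construction, $R^N(t)$ dominates the maximal displacement of any characteristic on $[0,t]$, the integration region $\{|\mu-x|\geq 3R^N(t)\}$ is, when pulled back along the flow, contained in $\{|\mu-x_0|\geq 2R^N(t)\}$, where the decay hypothesis on the initial density does all the work.

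First, from the invariance of $f^N$ along characteristics and Liouville's theorem for the partial dynamics, I would write
\begin{equation*}
\int_{|\mu-x|\geq 3R^N(t)} \frac{\rho^N(x,t)}{|\mu-x|^2}\,dx
= \iint \frac{\mathbf{1}_{|\mu-X^N(t,0,x_0,v_0)|\geq 3R^N(t)}}{|\mu-X^N(t,0,x_0,v_0)|^2}\, f_0^N(x_0,v_0)\,dx_0\,dv_0.
\end{equation*}
Next, the elementary bound $|X^N(t,0,x_0,v_0)-x_0|\leq\int_0^t \mathcal{V}^N(s)\,ds=R^N(t)-1$, combined with the triangle inequality, implies that on the set $\{|\mu-X^N(t,0,x_0,v_0)|\geq 3R^N(t)\}$ one has both $|\mu-x_0|\geq 2R^N(t)$ and $|\mu-X^N(t,0,x_0,v_0)|\geq |\mu-x_0|/2$. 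Substituting these into the right-hand side yields
\begin{equation*}
\int_{|\mu-x|\geq 3R^N(t)} \frac{\rho^N(x,t)}{|\mu-x|^2}\,dx
\;\leq\; 4\int \frac{\rho^N(x_0,0)}{|\mu-x_0|^2}\,dx_0.
\end{equation*}

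To close the argument I would use the hypotheses on the initial data: integrating out the super-gaussian factor of (\ref{Ga}) gives $\rho^N(x_0,0)\leq C\,g(|x_0|)$, and (\ref{asp}) together with $\alpha>8/3>1$ implies $g\in L^3(\R^3)$. Since the kernel $|\cdot|^{-2}$ lies in the weak Lebesgue space $L^{3/2,\infty}(\R^3)$, the weak Young inequality yields $\sup_\mu \int g(|x_0|)/|\mu-x_0|^2\,dx_0<\infty$. Equivalently, a direct splitting of $\R^3$ into the three regions near the origin, near $\mu$, and far from both makes the bound explicit: on each region the contribution is controlled by a power of $|\mu|$ which stays bounded as soon as $\alpha>1$. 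The only delicate step is precisely this Riesz-type convolution bound; the first two steps are little more than bookkeeping via Liouville and the definition of $R^N(t)$.
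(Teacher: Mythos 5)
Your argument is correct, and it is essentially the standard one: the paper itself omits the proof of this lemma (deferring to \cite{CCM16}), and the route taken there is the same Lagrangian pull-back you use — change variables to time $0$ via Liouville's theorem, observe that by the definition (\ref{RN}) every characteristic moves by at most $R^N(t)-1$, so the far region $\{|\mu-x|\geq 3R^N(t)\}$ originates from $\{|\mu-x_0|\geq 2R^N(t)\}$ with $|\mu-X^N(t)|\geq \tfrac12|\mu-x_0|$, and then invoke the decay of $\rho^N(\cdot,0)\leq Cg(|\cdot|)$. All of these steps check out, and the final convolution bound $\sup_\mu\int g(|x_0|)\,|\mu-x_0|^{-2}dx_0<\infty$ does hold for $\alpha>1$ (indeed you could even keep the constraint $|\mu-x_0|\geq 2R^N(t)\geq 2$ and never see the singularity of the kernel). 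The one imprecision is the weak Young justification as literally stated: the endpoint inclusion $L^3\ast L^{3/2,\infty}\subset L^\infty$ is false in general (one needs the Lorentz refinement $L^{3,1}$, which your $g$ does satisfy since it is bounded and lies in $L^p$ for all $p>3/\alpha$). This does not damage the proof, because your alternative direct splitting — near $\mu$, near the origin, and the far region, using $g$ bounded and $g(|x|)\leq C|x|^{-\alpha}$ with $\alpha>1$ (or simply the monotonicity of $g$ together with a dyadic decomposition in $|\mu-x_0|$) — gives the uniform bound in $\mu$ correctly; I would simply drop the weak Young remark and keep the explicit splitting.
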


 \subsection{Convergence of the partial dynamics as $N \to \infty$   and proof of Theorem \ref{th_03}}
 \label{subs_conv}

We now start the proof of Theorem \ref{th_03}. We recall that $\Lambda$ is the spatial support of $f_0$ and $b(N)$ is the ball in $\mathbb{R}^3$ of center $0$ and radius $N.$ We fix a couple $(x,v)\in  \Lambda\times b(N)$,
 and we consider $X^N(t)$ and $X^{N+1}(t),$ that is the time evolved characteristics, both starting from this initial condition, in the  different dynamics relative to the initial distributions $f_0^N$ and $f_0^{N+1}.$
 
  We set
\begin{equation}
\delta^N(t) =\sup_{(x,v)\in \Lambda\times b(N)} |X^N(t)-X^{N+1}(t)|
\label{delta^N}
\end{equation}
\begin{equation}
\eta^N(t)=\sup_{(x,v)\in \Lambda\times b(N)} |V^N(t)-V^{N+1}(t)|
\end{equation}
and 
\begin{equation}
\sigma^N(t)=\delta^N(t)+\eta^N(t).
\end{equation}
Our goal is to prove the following estimate
\begin{equation}
\sigma^N(t)\leq C^{-N^c} \label{fff} 
\end{equation}
with $c$ a positive exponent.  Once we get this estimate, the proof of the Theorem is accomplished. Indeed, (\ref{fff}) implies that the sequences $X^N(t)$ and $V^N(t)$ are Cauchy sequences, uniformly on  $[0,T].$ 
 Therefore, for any fixed $(x,v)$ they converge to limit functions which we call $X(t)$ and $ V(t).$ 
 The proof of property (\ref{dec}),  and the facts that the solution is unique and satisfies
 system (\ref{ch}),  follow directly from what already done in Section 3 of \cite{CCM16}.
\bigskip

$\mathbf{Proof\, of \,estimate\, (\ref{fff})}$.

  We have
\begin{equation}
\begin{split}
&|X^N(t)-X^{N+1}(t)| =\\&  \, \bigg| \int_0^t dt_1 \int_0^{t_1} dt_2 \, \Big[E^N\left(X^N(t_2), t_2\right)
+V^N(t)\wedge B\left(X^{N}(t_2)\right)   \\
&  -  E^{N+1}\left(X^{N+1}(t_2), t_2\right)
-V^{N+1}(t)\wedge B\left(X^{N+1}(t_2)\right)  \Big] \bigg| \leq   \\
&\int_0^t dt_1 \int_0^{t_1} dt_2 \, \left[ \mathcal{G}_1(x,v,t_2) + \mathcal{G}_2(x,v,t_2) + \mathcal{G}_3(x,v,t_2)\right]
\end{split}
\label{iter_}
\end{equation}
where
\begin{equation}
\mathcal{G}_1(x,v,t) = \left|E^N\left(X^N(t), t\right)
-E^N\left(X^{N+1}(t), t\right)\right|
\end{equation}
\begin{equation}
\mathcal{G}_2(x,v,t) = \left|E^N\left(X^{N+1}(t), t\right)
-E^{N+1}\left(X^{N+1}(t), t\right)\right|
\end{equation}
and
\begin{equation}
\mathcal{G}_3(x,v,t) = \left|V^N(t)\wedge B\left(X^N(t)\right)
-V^{N+1}(t)\wedge B\left(X^{N+1}(t)\right)\right|. 
\end{equation}

To estimate the first term $\mathcal{G}_1$
we have to prove a quasi-Lipschitz property for $E^N.$ We consider the difference $|E^N(x,t)-E^N(y,t)|$ at two generic points $x$ and $y,$ (which will be in our particular case $X^N(t)$ and $X^{N+1}(t)$) and set: 
\medskip

\begin{equation*}
\mathcal{G}_1'(x,y,t)= |E^N(x,t)-E^N(y,t)|\chi (|x-y|\geq 1)
\end{equation*}
and
\begin{equation*}
\mathcal{G}_1''(x,y,t)= |E^N(x,t)-E^N(y,t)|\chi (|x-y|< 1).
\end{equation*}
 Hence 
\begin{equation}
\mathcal{G}_1(x,v,t)=\mathcal{G}_1'(X^N(t),X^{N+1}(t),t)+\mathcal{G}_1''(X^N(t),X^{N+1}(t),t). \label{F}\end{equation}
\medskip
By Proposition \ref{prop2} and Corollary \ref{coro} we have
\begin{equation}
\begin{split}
\mathcal{G}_1'(x,y,t)\leq&\ |E^N(x,t)|+|E^N(y,t)|\leq C N^{\frac{7-\alpha}{3}}.
\end{split}\label{b0}
\end{equation}
By the range of the parameter $\alpha\in(\frac83, 3)$  it is  $\frac{7-\alpha}{3}<3\gamma$, so that we get
\begin{equation}
\begin{split}
\mathcal{G}_1'(x,y,t)\leq CN^{3\gamma}\leq CN^{3\gamma}|x-y|. \label{b1}
\end{split}
\end{equation}

\medskip
Let us now estimate the term $\mathcal{G}_1''$. 

We put $\zeta=\frac{x+y}{2}$ and consider the sets: 

\begin{equation}
\begin{split}
&S_1=\{z:|\zeta-z|\leq 2|x-y|\}, \\& S_2=\Big\{z:2|x-y|\leq |\zeta-z|\leq\frac{4}{|x-y|}\Big\}\\& S_3=\Big\{z:|\zeta-z|\geq \frac{4}{|x-y|}\Big\}.\label{S'}
\end{split}
\end{equation}
We have:
\begin{equation}
\begin{split}
&\mathcal{G}_1''(x,y,t)\leq  \int_{S_1\cup S_2\cup S_3} \left|\frac{x-z}{|x-z|^3}-\frac{y-z}{|y-z|^3}\right| \rho^N(z,t) \,dz.
\end{split} \label{s}
\end{equation}

By (\ref{rho_t}) and the definition of $\zeta$ we have
\begin{equation}
\begin{split}
&\int_{S_1} \left| \frac{x-z}{|x-z|^3}-\frac{y-z}{|y-z|^3}\right| \rho^N(z,t) \,dz\leq\\& CN^{3\gamma} \log^6 N  \int_{S_1} \left|\frac{1}{|x-z|^2}+\frac{1}{|y-z|^2}  \right|\,dz\leq C N^{3\gamma} \log^6 N |x-y|.
\end{split}
\label{Ai1}\end{equation}
Let us pass to the integral over the set $S_2.$ By the Lagrange theorem
applied to each component   $i=1, 2, 3$,  of $E^N(x,t)-E^N(y,t)$, there exists $\xi_z$  such that $\xi_z=\kappa x +(1-\kappa)y$ and $\kappa\in [0,1]$ (depending on $z$), for which
\begin{equation} 
\begin{split}
\int_{S_2} \left|\frac{x_i-z_i}{|x-z|^3}-\frac{y_i-z_i}{|y-z|^3}\right|& \rho^N(z,t) dz\leq C |x-y|\int_{S_2} \frac{\rho^N(z,t)}{|\xi_z-z|^3}\,dz\\&\leq  CN^{3\gamma}\log^6 N |x-y|\int_{S_2} \frac{1}{|\xi_z-z|^3}\,dz.
\end{split}\label{D'}
\end{equation}
Since in $S_2$ it results $|\xi_z-z|\geq \frac{|\zeta-z|}{2}$, we have
$$
\int_{S_2} \frac{1}{|\xi_z-z|^3}\,dz \leq 8\int_{S_2} \frac{1}{|\zeta-z|^3}\,dz \leq C |\log |x-y|+1|
$$

\noindent and combining the results for the three components
\begin{equation}
\begin{split}
&\int_{S_2} \left|\frac{x-z}{|x-z|^3}-\frac{y-z}{|y-z|^3}\right| \rho^N(z,t) \,dz\leq  CN^{3\gamma} \log^6 N
|x-y|\, |\log |x-y|+1|. \label{D}
\end{split}
\end{equation}

For the last integral over $S_3,$ again by the Lagrange theorem, we have for some $\xi_z=\kappa x +(1-\kappa)y$ and $\kappa\in [0,1],$  
\begin{equation}
 \int_{ S_3} \left|\frac{x_i-z_i}{|x-z|^3}-\frac{y_i-z_i}{|y-z|^3}\right| \rho^N(z,t)\, dz\leq C |x-y| \int_{ S_3} \frac{\rho^N(z,t)}{|\xi_z-z|^3}\,dz.\label{S}
 \end{equation}
Notice that, since $z\in S_3$ and $|x-y|<1$ by definition of $\mathcal{G}_1''$, it is 
$$
|\xi_z - z|\geq  |\zeta - z| - |x-y|,
$$
and 
$$
\frac{1}{|\xi_z - z|} \leq \frac{1}{|\zeta - z| - |x-y|} \leq \frac{1}{\frac34|\zeta - z| +\frac14 |\zeta-z|- |x-y|}
\leq \frac43 \frac{1}{|\zeta-z|},
$$
 then we have
\begin{equation}
\begin{split}
\int_{ S_3} \frac{\rho^N(z,t)}{|\xi_z-z|^3}\,dz \leq C\int_{|\zeta-z|\geq 4 } \frac{\rho^N(z,t)}{|\zeta-z|^3}\,dz.
\end{split}
\end{equation}
Lemmas \ref{lemR'/R}, \ref{Lem_1} and \ref{Lem_1'}   imply
\begin{equation}
\int_{|\zeta-z|\geq 4 } \frac{\rho^N(z,t)}{|\zeta-z|^3}\,dz\leq CQ^N(R^N(t),t)^{\frac35} + C
\end{equation}
and by (\ref{Q^N}) we get
\begin{equation}
\int_{|\zeta-z|\geq 4} \frac{\rho^N(z,t)}{|\zeta-z|^3}\,dz\leq CN^{\frac{9-3\alpha}{5}}.
\end{equation}
Using this estimate in (\ref{S}) and going back to (\ref{s}), by (\ref{Ai1}) and (\ref{D}) we have proven that
\begin{equation}
\mathcal{G}_1''(x,y,t)\leq CN^{3\gamma} \log^6 N |x-y|  \, |\log |x-y|+1|. \label{f1}
\end{equation}

\bigskip
 In conclusion, recalling (\ref{F}), estimates (\ref{b1}),  (\ref{f1}), and the definition of $\delta^N(t)$ (\ref{delta^N}) show that
\begin{equation}
\mathcal{G}_1(x,v,t)\leq  CN^{3\gamma} \log^6 N \left(1+|\log \delta^N(t)|\right)\delta^N(t). \label{f1'}
\end{equation}

\medskip

We now concentrate on the term $\mathcal{G}_2$. 
Putting $\bar{X}=X^{N+1}(t),$  we have:
\begin{equation}
\mathcal{G}_2(x,v,t)\leq \mathcal{G}_2'(\bar X,t)+\mathcal{G}_2''(\bar X,t),
\label{I_2}
\end{equation}
where
\begin{equation}
\mathcal{G}_2'(\bar X,t)=\left|\int_{|\bar X - y|\leq 2\delta^N(t)} \frac{  \rho^N(y,t)-\rho^{N+1}(y,t)}{|\bar X-y|^2} \, dy \right|
\end{equation}
\begin{equation}
\mathcal{G}_2''(\bar X,t)=\left|\int_{2\delta^N(t)\leq |\bar X - y| } \frac{  \rho^N(y,t)-\rho^{N+1}(y,t)}{|\bar X-y|^2} \, dy \right|.
\end{equation}
By estimate (\ref{rho_t}) we get
\begin{equation}
\mathcal{G}_2'(\bar X,t)\leq \int_{|\bar X - y|\leq 2\delta^N(t)} \frac{  \rho^N(y,t)+\rho^{N+1}(y,t)}{|\bar X-y|^2} \, dy \leq CN^{3\gamma}\log^6 N \, \delta^N(t).\label{F2'}
\end{equation}
Now we estimate the term $\mathcal{G}_2''.$ By the invariance of $f^N(t)$ along the characteristics, making a change of variables, we decompose the integral as follows:
\begin{equation}
\begin{split}
\mathcal{G}_2''(\bar X,t)=  \left| \int_{S^N(t)} d y \, dw\,\frac{f_0^N( y, w)}{|\bar X-  Y^N(t)|^2}\ -\int_{S^{N+1}(t)}dy \, dw\,\frac{f_0^{N+1}(y,w)}{|\bar X-Y^{N+1}(t)|^{2}} \right|
  \end{split}
\label{I_2''}
\end{equation}
where we put, for $i=N, N+1,$ $$(Y^i(t), W^i(t))= (X^i(y,w,t), V^i(y,w,t))$$ 
$$
S^{i}(t)=\{( y, w):2\delta^N(t)\leq|{\bar X}- Y^i(t)| \}.
$$
We notice that it is
\begin{equation} 
\mathcal{G}_2''(\bar X,t)\leq \mathcal{I}_1+\mathcal{I}_2+\mathcal{I}_3\label{l_i}
\end{equation}
where
\begin{equation}
\mathcal{I}_1= \int_{S^N(t)} d y  
 \int d w \left| \frac{1}{|\bar X -  Y^N(t)|^{2}}
-\frac{1}{|\bar X -  Y^{N+1}(t)|^{2}} \right| f_0^N( y,w),
\end{equation}
\begin{equation}
\mathcal{I}_2= \int_{S^{N+1}(t)} d y   
  \int d w \, \frac{ \left| f_0^N( y,w) - f_0^{N+1}( y, w) \right|}{|\bar X -  Y^{N+1}(t)|^{2}}  \, 
  \end{equation}
 \begin{equation}
  \mathcal{I}_3=
  \int_{S^N(t)\setminus S^{N+1}(t)}dy\int dw\,\frac{f_0^N(y,w)}{\left|\bar{X}-Y^{N+1}(t)\right|^2}.
  \end{equation}
We start by estimating $\mathcal{I}_1.$ 
  By the Lagrange theorem 
 \begin{equation}
\mathcal{I}_1\leq C |Y^{N}(t)- Y^{N+1}(t)| \int_{S^N(t)} dy\int dw \, \frac{ f_0^N(y,w)}{|\bar X - \xi^N(t) |^{3}} 
\end{equation}
where $\xi^N(t)= \kappa Y^{N}(t)+ (1-\kappa) Y^{N+1}(t)$   for a suitable $\kappa \in [0, 1]$.
By putting 
\begin{equation*}
\left(\bar{y}, \bar{w}\right)= \left((Y^N(t), W^N(t)\right)    
\end{equation*}
and
$$
{\bar{S}}^N(t)=\left\{ (\bar y, \bar w): (y, w)\in  S^N(t) \right\},
$$
 we get
\begin{equation}
\begin{split}
\mathcal{I}_1\leq & \, C  \delta^N(t)\int_{ {\bar{S}}^N(t)}d\bar{y}\int d\bar{w} \, \frac{ f^N(\bar{y},\bar{w}, t)}{|\bar X - \xi^N(t) |^{3}}.\end{split}\label{bar}
\end{equation}

If $(y,w)\in S^N(t)$ then\begin{equation*}
 |\bar X -  \xi^N(t)| > |\bar X - \bar{y}| - |\bar{y} -Y^{N+1}(t)|\geq|\bar X - \bar{y}|-\delta^N(t)\geq \frac{|\bar X - \bar{y}|}{2}\label{>>}
\end{equation*}
which, by (\ref{bar}), implies
\begin{equation}
\begin{split}
\mathcal{I}_1 \leq \ &C \delta^N(t)\int_{ \bar S^N(t)}d\bar{y}\int d\bar{w} \, \frac{ f^N(\bar{y},\bar{w}, t)}{|\bar X - \bar{y} |^{3}}\,=C \delta^N(t)\int_{\bar S^N(t)}d\bar{y}\, \frac{ \rho^N(\bar{y}, t)}{|\bar X - \bar{y} |^{3}}
\end{split}.
\end{equation}
Now we consider the sets $$
A_1= \left\{(\bar y,\bar w):2 \delta^N(t)\leq\left|\bar{X}-\bar{y}\right|\leq  \frac{4}{\delta^N(t)}\right\}
$$
$$
A_2= \left\{(\bar y,\bar w): 1\leq\left|\bar{X}-\bar{y}\right|\leq 3R^N(t)\right\} $$
$$
 A_3=  \left\{(\bar y, \bar w):3R^N(t)\leq  |\bar{X}-\bar{y}|\right\}.
$$
Then it is $\bar S^N(t)\subset \bigcup_{i=1,2,3} A_i$ 
and
\begin{equation}
\mathcal{I}_1 \leq C \delta^N(t)\sum_{i=1}^3\int_{  A_i}d\bar{y}\frac{ \rho^N(\bar{y}, t)}{|\bar X - \bar{y} |^{3}}.
\end{equation}
We estimate the integral over $A_1$ as we did in (\ref{D'}), the one over $A_2$ by means of Lemma \ref{Lem_1} and the last one by means of Lemma \ref{Lem_1'}, yielding
\begin{equation}
\begin{split}
\mathcal{I}_1 \leq C\delta^N(t)\left[N^{3\gamma}\log^6 N \, |\log \delta^N(t)|+Q^N\left(3R^N(t),t\right)^{\frac35}+1\right].
\end{split}\label{i1}
\end{equation}
Lemma \ref{lemR'/R} implies that
\begin{equation}
\mathcal{I}_1  \leq C \delta^N(t)\left[N^{3\gamma}\log^6 N \, |\log \delta^N(t)|+Q^N\left(R^N(t),t\right)^{\frac35}+1\right]
\end{equation}
and by (\ref{Q^N}) in Corollary \ref{coro} we get
\begin{equation}
\begin{split}
\mathcal{I}_1  \leq CN^{3\gamma} \log^6 N \, \delta^N(t)\left[|\log \delta^N(t)|+1\right]\end{split}.\label{i1'}
\end{equation}

We estimate now the term $\mathcal{I}_2.$ By the definition of $f_0^i,$ for $i=N, N+1,$ and by (\ref{Ga}) it follows 
\begin{equation}
\begin{split}
&\mathcal{I}_2 =
 \int_{S^{N+1}(t)} d y   
  \int_{} d w\,  \frac{  f_0^{N+1}( y, w) }{|\bar X -  Y^{N+1}(t)|^{2}} \chi\left(N\leq |w|\leq N+1\right) \leq\\& C\,e^{-\lambda N^q}\int_{S^{N+1}(t)} dy \int_{} dw\,\, \frac{g(y) }{|\bar X - Y^{N+1}(t)|^{2}} \chi\left(|w|\leq N+1\right).
\end{split} \label{AA} \end{equation}
We evaluate the integral over $S^{N+1}(t)$ by considering the sets  $$
B_1=   \left\{(y,w): |\bar{X}-Y^{N+1}(t)|\leq4R^{N+1}(t)\right\} $$

$$
 B_2=  \left\{(y,w):4R^{N+1}(t)\leq  |\bar{X}-Y^{N+1}(t)|\right\},
$$
so that 
\begin{equation}
S^{N+1}(t)\subset \bigcup_{i=1,2}B_i. \label{si}
\end{equation}
By putting 
\begin{equation*}
(\bar{y}, \bar{w})=\left((Y^{N+1}(t), W^{N+1}(t)\right),
\end{equation*}
 by (\ref{V^N}) it is $|\bar{w}|\leq CN,$ so that we have
\begin{equation}
\begin{split}
&\int_{B_{1}} dy \int_{} dw\, \frac{g(y) }{|\bar X - Y^{N+1}(t)|^{2}}\chi\left(|w|\leq N+1\right)\leq\\&C \int_{|\bar{w}|\leq C N} d\bar{w}\int_{ |\bar{X}-\bar{y}|\leq 4R^{N+1}(t)} d\bar{y}\, \frac{1 }{|\bar X - \bar{y}|^{2}}\leq CN^3R^{N+1}(t)\leq CN^4 .\label{g5}
\end{split}\end{equation}

For the integral over the set $B_2,$ we observe that, if $|\bar{X}-Y^{N+1}(t)|\geq 4R^{N+1}(t),$ then $|\bar{X}-y|\geq 3R^{N+1}(t)$ and  
\begin{equation*}
|\bar{X}-Y^{N+1}(t)|\geq |\bar{X}-y|-R^{N+1}(t)\geq \frac23|\bar{X}-y|.
\end{equation*} Hence
\begin{equation}
\begin{split}
&\int_{B_2 } dy \int_{} dw\, \frac{g(y) }{|\bar X - Y^{N+1}(t)|^{2}}\chi\left(|w|\leq N+1\right)\leq \\&\frac94\int_{ |\bar{X}-y|\geq 3R^{N+1}(t)} dy \int_{|w|\leq N+1} dw\, \frac{g(y) }{|\bar X -y|^{2}}\leq\\& CN^3\int_{|\bar{X}-y|\geq 3R^{N+1}(t)}  dy\, \frac{g(y) }{|\bar X -y|^{2}}.
\end{split}\end{equation}
The last integral can be shown to be bounded by a constant, by using Lemma \ref{Lem_1'}.
 Hence we can conclude that
\begin{equation}
\int_{B_2 } dy \int_{} dw\, \frac{g(y) }{|\bar X - Y^{N+1}(t)|^{2}}\chi\left(|w|\leq N+1\right)\leq CN^3.\label{g6}
\end{equation}

Going back to (\ref{AA}), by (\ref{si}), (\ref{g5}) and (\ref{g6}) we have
\begin{equation}
\mathcal{I}_2\leq Ce^{-\lambda N^q}\left(N^4+N^3\right)\leq  Ce^{-\frac{\lambda}{2} N^q}.
\label{I}
\end{equation}
Finally, we estimate the term $\mathcal{I}_3.$ If $(y,w)\in S^N(t),$ then
$$
\left|\bar{X}-Y^{N+1}(t)\right|\geq \left|\bar{X}-Y^{N}(t)\right| -\delta^N(t)\geq \delta^N(t),
$$
so that
 \begin{equation}
\begin{split}
 & \mathcal{I}_3 =
  \int_{S^N(t)\setminus S^{N+1}(t)}dy\int dw\,\frac{f_0^N(y,w)}{\left|\bar{X}-Y^{N+1}(t)\right|^2}\leq\\&
\frac{1}{[\delta^N(t)]^2}  \int_{\left(S^{N+1}(t)\right)^c}dy\int dw\,f_0^N(y,w).
 \end{split} \end{equation}
 If $(y,w)\in \left(S^{N+1}(t)\right)^c,$ then $$\left|\bar{X}-Y^{N}(t)\right|\leq \left|\bar{X}-Y^{N+1}(t)\right|+ \delta^N(t)\leq 3\delta^N(t).$$ 
 Hence, by putting 
 \begin{equation*}
 (\bar{y}, \bar{w}) = \left(Y^N(t), W^N(t)\right)
 \end{equation*}
  we get
 \begin{equation}
\begin{split}
  \mathcal{I}_3\leq &
\,\frac{1}{[\delta^N(t)]^2}  \int_{\left|\bar{X}-\bar{y}\right|\leq 3\delta^N(t)} d\bar{y}\int d\bar{w}f^N(\bar{y}, \bar{w}, t)\leq\\&\frac{1}{[\delta^N(t)]^2}  \int_{ \left|\bar{X}-\bar{y}\right|\leq 3\delta^N(t)}d\bar{y}\,\rho(\bar{y},t)\leq \\& C\frac{N^{3\gamma}\log^6 N}{[\delta^N(t)]^2}  \int_{ \left|\bar{X}-\bar{y}\right|\leq 3\delta^N(t)}d\bar{y}\,\leq CN^{3\gamma}\log^6 N\,
\delta^N(t).\end{split}\label{I3}
  \end{equation}
Going back to (\ref{l_i}), by (\ref{i1'}), (\ref{I}) and (\ref{I3}) we have
\begin{equation}
\mathcal{G}_2''\leq C\left[N^{3\gamma}\log^6 N \, \delta^N(t)\left(|\log \delta^N(t)|+1\right)+e^{-\frac{\lambda}{2} N^q}\right], \label{F2''}
\end{equation}
so that, by (\ref{I_2}), (\ref{F2'}) and (\ref{F2''}), we get
\begin{equation}
\mathcal{G}_2\leq C\left[N^{3\gamma}\log^6 N \, \delta^N(t)\left(|\log \delta^N(t)|+1\right)+e^{-\frac{\lambda}{2} N^q}\right].\label{f}
\end{equation}
Hence, going back to the estimate (\ref{f1'}) of the term $\mathcal{G}_1$ we have that
\begin{equation}
\mathcal{G}_1+\mathcal{G}_2\leq C\left[N^{3\gamma}\log^6 N \, \delta^N(t)\left(|\log \delta^N(t)|+1\right)+e^{-\frac{\lambda}{2} N^q}\right].\label{g'}
\end{equation}

We need now a property, which  is easily seen to hold by convexity, that is, for any $s\in (0,1)$ and $a\in (0,1)$ it holds:
$$
 s(|\log s|+1)\leq s|\log a|+a.
 $$
Hence, for $\delta^N(t)< 1,$ we have, for any $a<1,$
$$
\mathcal{G}_1+\mathcal{G}_2\leq C\left[N^{3\gamma} \log^6 N \, \left( \delta^N(t)|\log a|+a\right)+e^{-\frac{\lambda}{2} N^q}\right].
$$
Therefore, in case $\delta^N(t)< 1,$ we choose $a =e^{-\lambda N^q}$,  yielding
\begin{equation}
\mathcal{G}_1+\mathcal{G}_2\leq C\left[N^{3\gamma+q}\log^6 N \, \delta^N(t)+e^{-\frac{\lambda}{2} N^q}\right].\label{fdue}
\end{equation}
In case that $\delta^N(t)\geq1$ we come back to (\ref{g'})  (valid for any $\delta^N(t)$)
and we insert
the bound $\delta^N(t)\leq C N$, obtaining
$$
\mathcal{G}_1+\mathcal{G}_2\leq C\left[N^{3\gamma} \log^7 N \,  \delta^N(t)+e^{-\frac{\lambda}{2} N^q}\right],
$$
which can be included in (\ref{fdue}).

Finally we estimate the term $\mathcal{G}_3.$ We have
\begin{equation}
\begin{split}
\mathcal{G}_3(x,v,t) \leq &\ |V^N(t)|| B(X^{N}(t))-B(X^{N+1}(t)|+\\&
\ |V^N(t)-V^{N+1}(t)|| B(X^{N+1}(t))|. 
\end{split}\end{equation}
By applying the Lagrange theorem we have
\begin{equation*}
| B(X^{N}(t))-B(X^{N+1}(t)|\leq C\frac{|X^{N}(t)-X^{N+1}(t)|}{|r_0- r_\xi |^{\tau+2}}
\end{equation*}
where $r_\xi$ is the $r$-coordinate (in toroidal  coordinates) of a point   $\xi^N(t)$  of the segment joining $X^N(t)$ and $X^{N+1}(t).$ Due to the bound (\ref{B}), it has to be $|r_0-r_\xi|\geq \frac{C}{N^{2/\tau}}.$
 Hence
 \begin{equation}
 | B(X^{N}(t))-B(X^{N+1}(t)|\leq CN^{\frac{2(\tau +2)}{\tau}}\delta^N(t).
\end{equation}
This, together with the bounds (\ref{V^N}) and (\ref{B}), imply
\begin{equation}
\begin{split}
\mathcal{G}_3(x,v,t)& \leq \,C\left[N^{\frac{3\tau+4}{\tau }} \delta^N(t)+N^{\frac{2(\tau+1)}{\tau}}\eta^N(t)\right].\\&
 \label{F3}
\end{split}
\end{equation}
At this point, going back to (\ref{iter_}), taking the supremum over the set $\{(x,v)\in \Lambda\times b(N)\},$ by (\ref{fdue}) and (\ref{F3}) we arrive at:
\begin{equation}
\begin{split}
\delta^N(t) \leq \,&C\left(N^{3\gamma+q}\log^6 N +N^{\frac{3\tau+4}{\tau }}\right) \int_0^t dt_1 \int_0^{t_1} dt_2\,\delta^N(t_2)+\\ &C \int_0^t dt_1 \int_0^{t_1} dt_2\,e ^{-\frac{\lambda}{2} N^q}+
CN^{\frac{2(\tau+1)}{\tau}}\int_0^t dt_1 \int_0^{t_1} dt_2 \ \eta^N(t_2),\label{dd}
\end{split}
\end{equation}
where in (\ref{f1'}) we have taken into account  the bound (\ref{V^N}), which gives $|\delta^N(t)|\leq CN.$
On the other side, by using the same method to estimate the quantity $\eta^N(t),$ we get, analogously
\begin{equation}
\begin{split}
\eta^N(t) \leq \,&C\left(N^{3\gamma+q}\log^6 N +N^{\frac{3\tau+4}{\tau }}\right) \int_0^t dt_1 \,\delta^N(t_1)+\\ &C \int_0^t dt_1 \,e ^{-\frac{\lambda}{2} N^q}+
CN^{\frac{2(\tau+1)}{\tau}}\int_0^t dt_1  \, \eta^N(t_1).
\end{split}
\end{equation}
Since obviously $\delta^N(t_1)\leq \int_0^{t_1}dt_2\,\eta^N(t_2)$ we get from the last eqn.
\begin{equation}
\begin{split}
&\eta^N(t) \leq C\left(N^{3\gamma+q}\log^6 N+N^{\frac{3\tau+4}{\tau }}\right) \int_0^t dt_1 \int_0^{t_1} dt_2\, \eta^N(t_2)+C \int_0^t dt_1\,e ^{-\frac{\lambda}{2} N^q}\\&+CN^{\frac{2(\tau+1)}{\tau}}\int_0^t dt_1  \ \eta^N(t_1).\label{ee}
\end{split}
\end{equation}

Hence, summing up (\ref{dd}) and (\ref{ee}), we have:

\begin{equation}
\begin{split}
\sigma^N(t)\leq \,&C\left(N^{3\gamma+q}\log^6 N+N^{\frac{3\tau+4}{\tau }}\right) \int_0^t dt_1 \int_0^{t_1} dt_2\,\sigma^N(t_2)+\\&CN^{\frac{2(\tau+1)}{\tau}}\int_0^t dt_1  \ \sigma^N(t_1)+C \left(t+\frac{t^2}{2}\right)e ^{-\frac{\lambda}{2} N^q}.
\end{split}\label{d+e}
\end{equation}
We take
\begin{equation}
\nu > \frac{3\gamma+q}{2},  \label{nu}  
\end{equation}
moreover by Proposition \ref{prop_1} it is $\frac67<\gamma<1$, and by (\ref{Ga}) it is $q>\frac{18}{7}$,
 hence choosing $\tau > \frac72$
we get 
\begin{equation}
\begin{split}
\sigma^N(t)\leq \,C&\Bigg[N^{2\nu} \int_0^t dt_1 \int_0^{t_1} dt_2\,\sigma^N(t_2)+N^{\nu}\int_0^t dt_1  \ \sigma^N(t_1)+\\& \left(t+\frac{t^2}{2}\right)e ^{-\frac{\lambda}{2} N^q}\Bigg].
\end{split}\label{d'}
\end{equation}

We insert in the integrals the same inequality  for $\sigma^N(t_1)$ and $\sigma^N(t_2)$ and iterate in time, up to $k$ iterations. By direct inspection, using in the last step the estimate $\sup_{t\in [0,T]}\sigma^N(t)\leq CN,$ we arrive at
\begin{equation}
\begin{split}
\sigma^N(t)\leq &\,CNe ^{-\frac{\lambda}{2} N^q}\sum_{i=1}^{k-1}C^i\sum_{j=0}^{i}\binom{i}{j}\frac{N^{2\nu j}t^{2j}N^{\nu (i-j)}t^{i-j}}{(2j+i-j)!} +\\&C^kN \sum_{j=0}^{k}\binom{k}{j}\frac{N^{2\nu j}t^{2j}N^{\nu (k-j)}t^{k-j}}{(2j+k-j)!}.\end{split}
\label{double_sum}
\end{equation}
 In the double sum of (\ref{double_sum}) we have all combinations of $j$ double time integrations of
 $e^{-\frac{\lambda}{2}N^q}$, yielding the factor  $N^{2\nu j} t^{2 j}$ and the contribution $2j$ in the factorial at
 denominator, and of ($i-j$) single time integrations of $e^{-\frac{\lambda}{2}N^q}$, yielding the factor $N^{\nu (i-j)} t^{i-j}$
 and the contribution ($i-j$) in the factorial at denominator; 
 the last single sum in (\ref{double_sum}) has the same structure, coming from terms of the iteration which avoid integration of $e^{-\frac{\lambda}{2}N^q}$.  Note that  in absence of the middle term (single time integration) in
 the right hand side of (\ref{d'}) we would obtain
 $$
 \sigma^N(t) \leq  N C^k N^{2\nu k} \frac{t^{2k}}{(2k)!} + N e^{-\frac{\lambda}{2}N^q}
 \sum_{i=1}^{k-1} C^i N^{2\nu i} \frac{t^{2 i}}{(2 i)!}
 $$
 as in the case of  \cite{CCM16}.

By putting
\begin{equation*}
S_k''=\sum_{i=1}^{k-1}C^i\sum_{j=0}^{i}\binom{i}{j}\frac{N^{\nu (i+j)}t^{i+j}}{(i+j)!}
\end{equation*}
and 
\begin{equation*}
S_k'=C^k\sum_{j=0}^{k}\binom{k}{j}\frac{N^{\nu (j+k)}t^{j+k}}{(j+k)!}
\end{equation*}
we get
\begin{equation}
\sigma^N(t)\leq \,CNe ^{-\frac{\lambda}{2} N^q}S''_k+CNS_k'.\label{summ}
\end{equation}
 
 We start by estimating $S''_k.$ Recalling that $\binom{i}{j}<2^i$ we get
\begin{equation}
\begin{split}
S''_k\leq &\,\sum_{i=1}^{k-1}2^i C^i\sum_{j=0}^{i}\frac{N^{\nu (i+j)}t^{i+j}}{(i+j)!}.  
\end{split}
\end{equation}
The use of the Stirling formula $a^nn^n\leq n!\leq b^nn^n$ for some constants $a,b>0$ yields:
\begin{equation}
S''_k\leq \,\sum_{i=1}^{k-1}2^i\sum_{j=0}^{i}\frac{N^{\nu(i+j)}(Ct)^{i+j}}{(i+j)^{i+j}}\leq \sum_{i=1}^{k-1}2^i\frac{N^{\nu i}(Ct)^{i}}{i^{i}}
\sum_{j=0}^{i}\frac{N^{\nu j}(Ct)^{j}}{j^{j}},
\end{equation}
from which it follows, again by the Stirling formula,
\begin{equation}
S''_k\leq \left(e^{N^{\nu}Ct}\right)^2\leq e^{N^{\nu}C}.\label{sum'}
\end{equation}

For the term $S_k',$ putting $j+k=\ell,$ we get
\begin{equation}
S_k'\leq 2^k C^k\sum_{\ell=k}^{2k}\frac{N^{\ell\nu}(Ct)^{\ell}}{\ell^\ell}\leq  C^kk\frac{N^{2k\nu}}{k^k}.
\end{equation}
By choosing $k=N^\zeta$  with $\zeta>  2\nu$,  we have, for sufficiently large $N$,
\begin{equation}
 S_k'\leq C^kk\left(k^{\frac{2\nu}{\zeta}-1}\right)^k\leq C^{-N^\zeta}. \label{ffinal}
 \end{equation}

Going back to (\ref{summ}), by (\ref{sum'}) and (\ref{ffinal}) we have seen that
\begin{equation}
\sigma^N(t)\leq CN\left[e^{-\frac{\lambda}{2} N^q}e^{N^{\nu}C}+C^{-N^
\zeta}\right].
\end{equation}
Hence, by the choice 
$$
 q> \frac{45}{7} - \frac97 \alpha    \qquad \textnormal{and}  \qquad \frac{3\gamma+q}{2}<\nu<q
 $$
we have proven estimate (\ref{fff}).
\bigskip

\bigskip

\section{Generalizations}
\label{General}

Theorems \ref{th_02} and \ref{th_03} can be proven also with a weaker hypothesis on the initial data, that is substituting conditions (\ref{asp1})
and (\ref{asp})  with the following 
\begin{equation}
\int_{|i-x|\leq 1}dx \, g(|x|) \leq \frac{C}{|i|^\alpha}  \qquad  \textnormal{for any} \,\,\,\,   i\in {\mathbb{Z}}^3 \setminus \{0\}
\label{decay_den}
\end{equation}
being $\alpha$ in the respective intervals of Theorems \ref{th_02} and \ref{th_03}. This assumption has already been done in \cite{CCM16} and it
can be satisfied when  the spatial density $\rho(x,0)$
  has a suitable decay at infinity (as in (\ref{asp1}) and (\ref{asp})),  but also whenever $\rho(x,0)$ is constant, or has an oscillatory character, provided it has suitable support properties. 
Hence hypothesis (\ref{decay_den}) allows for spatial densities which possibly do not belong  to any $L^p$ space.

Moreover the initial data, both in positions and  velocities, are not necessarily intended to have spherical symmetry, 
provided that they satisfy the hypothesis of Theorems \ref{th_02} and \ref{th_03}.

 We want to  emphasize that when the magnetic lines are straight lines (condition that imposes an unbounded region 
$\Gamma$),  the previous results can be improved.
  In this article we have considered a region $\Gamma$ as a  torus,  and we had to deal with curved magnetic lines. This fact  complicated very much the problem. Actually a particle of velocity $v$ that remains close to the border of $\Gamma$ feels not only the electric force and the magnetic force (which does not increase its velocity),
 but also  a centrifugal force (proportional to $|v|^2$) much greater than the electric force, which affects considerably
 the dynamics. 
 In case the magnetic lines are straight lines this last force is absent and we can obtain stronger results.
 In particular we can extend the initial data to include a gaussian (Maxwell-Boltzmann) decay in the initial velocities.
 A region $\Gamma$ for which this analysis can be performed is for example an unbounded cylinder, 
 \begin{equation}
 \Gamma = \{x \in \R^3:   r<A  \} \qquad \qquad  r=\sqrt{x_2^2+x_3^2}
\end{equation}
 where the 
 magnetic field can be chosen as in \cite{inf},
\begin{equation}
 B(x)= (h(r^2),0,0) \qquad \qquad h(r^2) = \frac{1}{(A^2-r^2)^\tau} 
 \end{equation}
 or a half-space $\Gamma=\{ x\in \R^3: x_1>0\}$, with a magnetic field of the form
\begin{equation}
 B(x)= (0,0,h(x_1)) \qquad \qquad h(x_1) = \frac{1}{x_1^\tau}
 \end{equation} 
 and a suitable $\tau>0$.

 \bigskip
 
 \bigskip
 
 \bigskip


\noindent {\textbf{Acknowledgements}} \,\,
Work performed under the auspices of GNFM-INDAM and the Italian Ministry of the
University (MIUR).

 \bigskip
 
 \bigskip
 
 \bigskip

 \section*{Appendix}

\appendix

\setcounter{equation}{0}

\def\theequation{A.\arabic{equation}}

\medskip

\subsection*{Proof of Proposition \ref{media}}

From now on we will skip the index $N$ through the whole section. Moreover, we put for brevity 
$$
{\mathcal{V}}:={\mathcal{V}^N}(T)
$$
and 
$$
Q:=\sup_{t\in[0,T]}Q^N(R^N(t), t).
$$
Let  us define a time interval
\begin{equation}
\Delta:=
\frac{C}{{\cal{V}}^{2-\eta} \log \mathcal{V}}   \label{d1}
\end{equation}
with $\eta>0$ fixed later.

\medskip

 System  (\ref{chN}) in toroidal coordinates becomes
\begin{equation}
\label{tor_coor}
\left\{
\begin{aligned}
&-\dot\alpha^2 r + \ddot r - (R+r \cos\alpha) \dot\theta^2 \cos\alpha =
E_r + a'(r) \, \dot \theta   \\
&(R+r \cos\alpha) \ddot \theta + 2 \, \dot r \, \dot \theta   \cos\alpha 
-2 \, \dot\alpha \, \dot\theta \, r   \sin\alpha =
E_\theta  - \frac{a'(r) \, \dot r}{R+r \cos\alpha}    \\
&\ddot \alpha \, r + 2 \, \dot\alpha \, \dot r + (R+r \cos\alpha) \dot\theta^2 \sin\alpha  
= E_\alpha,  
\end{aligned}
\right.
\end{equation} 
where we denote by
$(E_r, E_\theta, E_\alpha)$ the components of the electric field in toroidal coordinates.
The components of the velocity in such coordinates are 
\begin{equation}
v_r = \dot r, \qquad  v_\theta = (R + r \cos\alpha) \dot \theta, \qquad v_\alpha = r \dot\alpha,
\label{tor_vel}
\end{equation}
and eq.s (\ref{tor_coor}) show that the magnetic field does not change the component $v_\alpha$ during the motion.

\noindent We consider two characteristics, solutions of (\ref{tor_coor}), 
\begin{equation}
\left( r_1(t), \theta_1(t), \alpha_1(t); \,\, \dot r_1(t), \dot\theta_1(t), \dot\alpha_1(t) \right)
\label{char_1}
\end{equation}

and
\begin{equation}
\left( r_2(t), \theta_2(t), \alpha_2(t); \,\, \dot r_2(t), \dot\theta_2(t), \dot\alpha_2(t) \right).
\label{char_2}
\end{equation}

\noindent  We give some preliminary Lemmas, proved in the following, in which we assume that the trajectories
remain in the region $r_0<r<\frac{R+r_0}{2}$.
This assumption is not essential, it is done only to avoid the singularity of the toroidal coordinates
for $r=R$. Actually for $r\geq \frac{R+r_0}{2}$ we could use cartesian coordinates, since the magnetic field is bounded,
and the analysis follows well known results as \cite{S}, \cite{W}.

\begin{lemma}\label{lem2}
 Let $t'\in [0,T]$ such that $t'+\Delta \in [0, T]$. The following holds: 
$$
\hbox{If} \qquad |r_1(t')  \dot\alpha_1(t')-r_2(t')\dot\alpha_2(t')|\leq  {\mathcal{V}}^{\eta}
$$
then
\begin{equation}
\sup_{t\in [t',  t'+\Delta]}|r_1(t)\dot\alpha_1(t)-r_2(t)\dot\alpha_2(t)|\leq 2  {\mathcal{V}}^{\eta}. \label{L1}
\end{equation}

\medskip

$$
\hbox{If} \qquad |r_1(t')\dot\alpha_1(t')-r_2(t')\dot\alpha_2(t')|\geq {\mathcal{V}}^{\eta}
$$
then
\begin{equation}
\inf_{t\in [t',t'+\Delta]}|r_1(t)\dot\alpha_1(t)-r_2(t)\dot\alpha_2(t)|\geq \frac12 {\mathcal{V}}^{\eta}. \label{L2}
\end{equation}
\end{lemma}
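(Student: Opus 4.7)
My approach begins from the observation that Lemma \ref{lem2} concerns the quantity $v_\alpha = r\dot\alpha$ — the toroidal velocity component that, as noted after (\ref{tor_vel}), is not directly controlled by the magnetic field. Using the identity $\dot v_\alpha = \dot r \dot\alpha + r\ddot\alpha$ together with the third equation of (\ref{tor_coor}), I rewrite the evolution of $v_\alpha$ along a single characteristic as
$$\dot v_\alpha = E_\alpha - \dot r \dot\alpha - (R+r\cos\alpha)\dot\theta^2 \sin\alpha.$$
Since the characteristic is assumed to lie in $r_0<r<(R+r_0)/2$, the geometric factors $r$ and $R+r\cos\alpha$ are bounded and bounded away from zero. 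Using (\ref{tor_vel}) and the definition of $\mathcal{V}$, the identities $\dot r = v_r$, $r\dot\alpha = v_\alpha$, $(R+r\cos\alpha)\dot\theta = v_\theta$ then give the pointwise bound
$$|\dot v_\alpha(t)| \leq |E_\alpha(X(t),t)| + C\mathcal{V}^2.$$

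I then apply this bound to each of the characteristics (\ref{char_1}) and (\ref{char_2}) and subtract, obtaining for every $t\in[t', t'+\Delta]$
$$\bigl|(r_1\dot\alpha_1 - r_2\dot\alpha_2)(t) - (r_1\dot\alpha_1 - r_2\dot\alpha_2)(t')\bigr| \leq \int_{t'}^{t'+\Delta}\bigl(|E_\alpha^{(1)}| + |E_\alpha^{(2)}|\bigr)ds + C\mathcal{V}^2 \Delta.$$
By the choice (\ref{d1}) of $\Delta$ the second term equals $C\mathcal{V}^\eta/\log\mathcal{V}$. For the first term I invoke Proposition \ref{prop2} together with (\ref{V^N}) and (\ref{Q^N}) of Corollary \ref{coro} to obtain $\|E\|_\infty \leq C\mathcal{V}^{4/3}Q^{1/3} \leq C\mathcal{V}^{(7-\alpha)/3}$, so this contribution is at most $C\mathcal{V}^{(1-\alpha)/3+\eta}/\log\mathcal{V}$, which is $o(\mathcal{V}^\eta)$ because $\alpha>8/3>1$ forces $(7-\alpha)/3<2$.

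Choosing $\eta$ small enough and $\mathcal{V}$ sufficiently large, both contributions sum to at most $\mathcal{V}^\eta/2$, so the change of $r_1\dot\alpha_1 - r_2\dot\alpha_2$ over $[t', t'+\Delta]$ is bounded by $\mathcal{V}^\eta/2$. The triangle inequality then immediately yields (\ref{L1}) (since $\mathcal{V}^\eta + \mathcal{V}^\eta/2 \leq 2\mathcal{V}^\eta$) and (\ref{L2}) (since $\mathcal{V}^\eta - \mathcal{V}^\eta/2 \geq \mathcal{V}^\eta/2$). The main delicate point is the calibration embedded in the definition (\ref{d1}): the coefficient $\mathcal{V}^2$ in the bound on $|\dot v_\alpha|$ is centrifugal in nature and essentially unavoidable in this curved geometry, so $\Delta$ is forced to be of order $\mathcal{V}^{\eta-2}/\log\mathcal{V}$; that this choice simultaneously tames the crude pointwise electric-field bound relies precisely on the assumed range $\alpha>8/3$, which ensures $(7-\alpha)/3<2$.
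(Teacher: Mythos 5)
Your argument is correct and is essentially the paper's own proof: both rewrite the third equation of (\ref{tor_coor}) as an evolution equation for $v_\alpha=r\dot\alpha$, bound the centrifugal and Coriolis-type terms by $C\mathcal{V}^2$ and the field term by $\|E\|_\infty\leq C\mathcal{V}^{4/3}Q^{1/3}\leq C\mathcal{V}^{(7-\alpha)/3}$ via Proposition \ref{prop2}, and then use the calibration $\Delta\sim \mathcal{V}^{\eta-2}/\log\mathcal{V}$ to make the total variation of $v_{1,\alpha}-v_{2,\alpha}$ over $[t',t'+\Delta]$ of order $\mathcal{V}^\eta/\log\mathcal{V}$, after which the triangle inequality gives both assertions. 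The only cosmetic difference is that no smallness of $\eta$ is actually needed, only that $\mathcal{V}$ be sufficiently large.
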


\bigskip

\noindent Let us put $v^{\perp}=\sqrt{v_r^2 + v_\theta^2}$, denoting the corresponding quantity
for the two characteristics as $v_i^{\perp}$,   $i=1, 2$.

\begin{lemma}\label{lem3}
 Let $t'\in [0,T]$ such that $t'+\Delta \in [0, T]$. The following holds:
  $$
\hbox{If} \qquad  |v_1^{\perp}(t')|\leq {\mathcal{V}}^\xi
$$
then
\begin{equation}
\sup_{t\in [t',t'+\Delta]}|v_1^{\perp}(t)|\leq2{\mathcal{V}}^\xi.   \label{L3}
\end{equation}

\medskip

$$
\hbox{If} \qquad   |v_1^{\perp}(t')|\geq {\mathcal{V}}^\xi
$$
then
\begin{equation}
\inf_{t\in [t',t'+\Delta]}|v_1^{\perp}(t)|\geq \frac{{\mathcal{V}}^\xi}{2},    \label{L4}
\end{equation}
\end{lemma}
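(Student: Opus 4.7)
My plan is to exploit the fact that the external magnetic field $B$ points along $\hat e_\alpha$, so that the Lorentz force $V\wedge B$ lies in the $(\hat e_r,\hat e_\theta)$ plane and contributes nothing to the $\alpha$--component of the motion; correspondingly, the third equation of (\ref{tor_coor}) carries no magnetic term. Using $v_\alpha=r\dot\alpha$ and solving that equation for $\dot v_\alpha$ yields
\begin{equation*}
\dot v_\alpha = E_\alpha - \frac{v_r v_\alpha}{r} - \frac{v_\theta^2 \sin\alpha}{R+r\cos\alpha}.
\end{equation*}

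I would then use the identity $|v^\perp|^2=|v|^2-v_\alpha^2$. Since $\frac{d}{dt}|v|^2 = 2v\cdot E$ by (\ref{baa}) and $v\cdot E = v_r E_r+v_\theta E_\theta+v_\alpha E_\alpha$ in toroidal components, substituting the expression for $\dot v_\alpha$ makes the $E_\alpha$--contribution cancel, leaving
\begin{equation*}
\frac{d}{dt}|v^\perp|^2 = 2v_r E_r + 2v_\theta E_\theta + \frac{2 v_r v_\alpha^2}{r} + \frac{2 v_\alpha v_\theta^2 \sin\alpha}{R+r\cos\alpha}.
\end{equation*}
The first two terms are dominated by $2|v^\perp|\,\|E^N(t)\|_\infty$. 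Proposition \ref{prop2} combined with Proposition \ref{propo}, Lemma \ref{Qbound} and $R^N(t)\leq 1+T\mathcal V$ gives $\|E^N\|_\infty \leq C\mathcal V^{(7-\alpha)/3}$. The two centrifugal--type terms are bounded by $C\mathcal V^3$ using $r>r_0$, $R+r\cos\alpha\geq (R-r_0)/2$ on the region where toroidal coordinates are used, and the trivial bound $|v_r|,|v_\theta|,|v_\alpha|\leq \mathcal V$.

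Integrating over $[t',t'+\Delta]$ with $\Delta$ as in (\ref{d1}), and observing that $(7-\alpha)/3<2$ throughout the range $\alpha\in(8/3,3)$, the centrifugal contribution dominates and I obtain
\begin{equation*}
\bigl||v^\perp(t)|^2 - |v^\perp(t')|^2\bigr| \leq C\mathcal V^3 \Delta \leq \frac{C\mathcal V^{1+\eta}}{\log \mathcal V}.
\end{equation*}
For $\eta$ chosen sufficiently small and $\xi$ in the range in which the lemma is actually invoked in the proof of Proposition \ref{media} (essentially $2\xi>1+\eta$), this increment is much smaller than $\mathcal V^{2\xi}$, so that $|v^\perp(t)|^2$ cannot move by more than a factor $4$ away from $|v^\perp(t')|^2$ over the interval. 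The upper-bound regime then gives (\ref{L3}) and the lower-bound regime gives (\ref{L4}).

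The principal obstacle is the centrifugal term of order $\mathcal V^3$, generated by the factor $r\dot\theta^2$ in the toroidal equations: it is this geometric term, and not the electric field, that forces the scale $\Delta \lesssim \mathcal V^{-(2-\eta)}/\log\mathcal V$, and it is completely absent in the case of straight magnetic lines discussed in Section \ref{General}, where a much longer step would be admissible. No bootstrap is required because the pointwise bounds $|v_r|,|v_\theta|,|v_\alpha|\leq \mathcal V$ are automatic from the very definition (\ref{mv}) of $\mathcal V$.
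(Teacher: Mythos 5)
Your derivation of the evolution equation for $|v^\perp|^2$ is correct, and in fact more careful than the paper's own display (\ref{perp}), which silently drops the two inertial terms $\frac{2v_rv_\alpha^2}{r}$ and $\frac{2v_\alpha v_\theta^2\sin\alpha}{R+r\cos\alpha}$. The gap is in your last step. With the parameters actually fixed in (\ref{params}) one has $2\xi=\frac{11}{7}-\frac{3\alpha}{14}$ and $1+\eta=\frac{12}{7}-\frac{2\alpha}{14}$, hence $2\xi-(1+\eta)=-\frac17-\frac{\alpha}{14}<0$ for every $\alpha\in(\frac83,3)$. So the increment $C\mathcal{V}^{3}\Delta\sim C\mathcal{V}^{1+\eta}/\log\mathcal{V}$ that you obtain is \emph{larger} than the threshold $\mathcal{V}^{2\xi}$, your hypothesis ``$2\xi>1+\eta$ in the range where the lemma is invoked'' is never satisfied, and $\xi$ cannot be raised to meet it without ruining the optimization of $\mathcal{I}_2$ in (\ref{I2}) and hence the exponent $\gamma$ of Proposition \ref{prop_1}. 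As written, the proof therefore establishes neither (\ref{L3}) nor (\ref{L4}).

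The missing idea is precisely the bootstrap you declare unnecessary. Argue by contradiction, as the paper does: take a maximal subinterval $[t_1,t_2]\subset[t',t'+\Delta]$ on which $\mathcal{V}^\xi<|v_1^\perp(t)|<2\mathcal{V}^\xi$ (resp.\ $\tfrac12\mathcal{V}^\xi<|v_1^\perp(t)|<\mathcal{V}^\xi$). On that interval $|v_r|,|v_\theta|\le|v_1^\perp|<2\mathcal{V}^\xi$, and it is essential to use this improved bound \emph{inside the centrifugal terms}: they become $O(\mathcal{V}^{2+\xi})$ and $O(\mathcal{V}^{1+2\xi})$ respectively, so their integrals over a time $\Delta$ as in (\ref{d1}) are $O(\mathcal{V}^{\xi+\eta}/\log\mathcal{V})$ and $O(\mathcal{V}^{2\xi+\eta-1}/\log\mathcal{V})$, both $o(\mathcal{V}^{2\xi})$ because $\xi-\eta=\frac{1}{14}+\frac{\alpha}{28}>0$ and $\eta<1$. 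The electric contribution is handled as you do (or as in the paper, with the factor $|v_1^\perp|\le 2\mathcal{V}^\xi$ rather than $\mathcal{V}$), and the contradiction $|v_1^\perp(t_2)|^2<2\mathcal{V}^{2\xi}$ closes the argument. It is exactly this self-improvement of the bound on $v_r,v_\theta$ — unavailable for $v_\alpha$, which stays of order $\mathcal{V}$ — that makes the lemma true at the stated scale $\Delta$; your crude bound $|v_r|,|v_\theta|\le\mathcal{V}$ throws it away.
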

where $\xi>0$ will be chosen later.

\bigskip

\begin{lemma}\label{lem4}
Let $t'\in [0,T]$  such that  $t'+\Delta \in [0, T]$,
and assume that 
$$
|v_{1,\alpha}(t')-v_{2,\alpha}(t')|\geq h {\mathcal{V}}^{\eta},    \quad \textit{for some} \,\, h\geq 1,
$$
$$
v_{1,\alpha} := r_1 \dot\alpha_1,  \qquad   v_{2,\alpha} := r_2 \dot\alpha_2.
$$
Then,  there exists $t_0\in [t',t'+\Delta]$ such that
$$
\left| X(t) - Y(t) \right|
\geq h \frac{{\mathcal{V}}^{\eta}}{8}|t-t_0|
$$
for all $t\in [t',t'+\Delta]$.
\end{lemma}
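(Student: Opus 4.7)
The overall strategy is to project $X(t)-Y(t)$ onto a fixed Cartesian direction aligned with $\hat e_\alpha$ at the initial time, and to show that the resulting scalar function has a sizable, sign-definite derivative on $[t',t'+\Delta]$; the conclusion then follows from the mean value theorem. I first isolate the generic case in which the two characteristics remain close in Cartesian distance throughout the interval, since the complementary far-apart case is immediate.

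\textbf{Far case.} Since $|v_{i,\alpha}|\leq \mathcal V$ for $i=1,2$, the hypothesis forces $h\leq 2\mathcal V^{1-\eta}$. If there is a time $t_*\in[t',t'+\Delta]$ with $|X(t_*)-Y(t_*)|\geq 4\mathcal V\Delta$, then, using $|\dot X-\dot Y|\leq 2\mathcal V$,
\[
|X(t)-Y(t)|\geq 2\mathcal V\Delta\geq \tfrac{h\mathcal V^\eta}{8}\Delta\geq \tfrac{h\mathcal V^\eta}{8}|t-t_*|,\qquad t\in[t',t'+\Delta],
\]
so the conclusion holds with $t_0=t_*$. Hence from now on $|X(t)-Y(t)|\leq 4\mathcal V\Delta=O(\mathcal V^{\eta-1}/\log\mathcal V)$ by \eqref{d1}. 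Combined with $|\dot X_i|\leq\mathcal V$, this implies that the toroidal coordinates $(r,\theta,\alpha)$ of both trajectories differ from the common value at $t'$ by at most $O(\mathcal V^{\eta-1}/\log\mathcal V)$, so the moving orthonormal frames $(\hat e_r,\hat e_\theta,\hat e_\alpha)$ evaluated at $X(t)$ and $Y(t)$ stay within the same small angular deviation of a reference frame at $X(t')$.

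\textbf{Close case.} A revisit of the proof of Lemma \ref{lem2} shows that the oscillation of $r_1\dot\alpha_1-r_2\dot\alpha_2$ over an interval of length $\Delta$ is bounded by an absolute quantity of order $\mathcal V^\eta/2$, independently of the initial size of the gap. Applied with initial gap $\geq h\mathcal V^\eta$, this yields
\[
|v_{1,\alpha}(t)-v_{2,\alpha}(t)|\geq \tfrac{h\mathcal V^\eta}{2},\qquad t\in[t',t'+\Delta],
\]
with no change of sign. Set $\hat n:=\hat e_\alpha\bigl(X(t')\bigr)$ and $\phi(t):=(X(t)-Y(t))\cdot\hat n$. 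Expanding $V_i(t)$ in the local frame at $X_i(t)$,
\[
\dot\phi(t)=\bigl(v_{1,\alpha}(t)-v_{2,\alpha}(t)\bigr)+\mathcal R(t),\qquad |\mathcal R(t)|\leq C\mathcal V\cdot\tfrac{\mathcal V^{\eta-1}}{\log\mathcal V}=\tfrac{C\mathcal V^\eta}{\log\mathcal V},
\]
the error gathering the defect $v_{i,\alpha}(\hat e_\alpha\cdot\hat n-1)$ and the transverse contributions $v_{i,r}(\hat e_r\cdot\hat n)+v_{i,\theta}(\hat e_\theta\cdot\hat n)$, all killed by the small angular deviation from Step 1. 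Hence $|\dot\phi(t)|\geq h\mathcal V^\eta/4$ for $\mathcal V$ large, $\phi$ is strictly monotone on $[t',t'+\Delta]$, and we let $t_0$ be its zero if one exists, or the endpoint minimizing $|\phi|$ otherwise. The mean value theorem then yields $|X(t)-Y(t)|\geq|\phi(t)|\geq(h\mathcal V^\eta/4)|t-t_0|$, which is stronger than the claim.

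\textbf{Main obstacle.} The delicate point is the control of $\mathcal R(t)$: the quantity $v_\alpha$ is a curvilinear component, not a Cartesian projection, so a priori the frames at the two trajectories could point in very different directions and the difference $v_{1,\alpha}-v_{2,\alpha}$ need not translate into any Cartesian displacement. The role of the case split is precisely to force the geometry into a regime where both moving frames are uniformly close to the fixed reference $\hat n$, so that the main term dominates $\mathcal R$ over the full interval of length $\Delta$ prescribed by \eqref{d1}.
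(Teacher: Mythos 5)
Your argument is correct (it even yields the better constant $1/4$), but it takes a genuinely different route from the paper's. The paper never leaves the curvilinear setting: it introduces $\Omega(t)=\int_{t'}^{t}\left[v_{1,\alpha}(s)-v_{2,\alpha}(s)\right]ds+\lambda(t')$, takes $t_0$ to be the minimizer of $|\Omega|$ on $[t',t'+\Delta]$, compares $\Omega$ with its tangent line at $t_0$ via the $\alpha$-equation (\ref{alpha_eq}) (the second derivative is $O(\mathcal{V}^2)$, hence contributes at most $\frac{\mathcal{V}^{\eta}}{4}|t-t_0|$ over an interval of length (\ref{d1})), uses $\Omega(t_0)\,\dot{\Omega}(t_0)(t-t_0)\geq 0$ from minimality, and only at the very end converts the curvilinear separation into Cartesian distance through the arc--chord inequality (\ref{cord}), $|\Omega(t)|\leq 2|X(t)-Y(t)|$ --- which is precisely where the paper's factor $8$ (rather than $4$) enters. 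You instead project $X-Y$ onto the frozen direction $\hat n=\hat e_\alpha(X(t'))$ after a far/close dichotomy: the far case is killed by $h\leq 2\mathcal{V}^{1-\eta}$, and in the close case the frame-rotation error $O(\mathcal{V}^2\Delta)=O(\mathcal{V}^{\eta}/\log\mathcal{V})$ is dominated by the persistent gap $h\mathcal{V}^{\eta}/2$ supplied by the proof of Lemma \ref{lem2} together with $h\geq 1$. What your version buys is a fully quantitative replacement of the paper's rather informal arc-versus-chord remark; the price is the extra case split and the (true, but worth stating) fact that the toroidal frame is Lipschitz in the Cartesian position on the working region $r_0<r<\frac{R+r_0}{2}$, where the coordinates are non-degenerate --- this is the same standing assumption the paper makes at the start of the Appendix. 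One wording slip: there is no ``common value at $t'$'' of the toroidal coordinates, since $X(t')\neq Y(t')$ in general; what you actually use, and what is true in the close case, is that every point $X(t)$, $Y(t)$ with $t\in[t',t'+\Delta]$ lies within $O(\mathcal{V}\Delta)$ of the single reference point $X(t')$, so all the relevant frames deviate from $\hat n$ by $O(\mathcal{V}\Delta)$.
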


\bigskip

We resume now the proof of Proposition \ref{media},  in close analogy to what done in \cite{CCM16}.
We will use in the sequel cartesian coordinates for volume elements and integrand functions,
 and toroidal coordinates for the parametrization of the region of integration.

We fix any $t\in [0,T]$ such that $t+\Delta\leq T.$ For any $s\in [t, t+\Delta]$ we consider the time evolution of two characteristics $(X(s),V(s))$ and $(Y(s),W(s))$, both solutions to system (\ref{chN}), corresponding to  the characteristics (\ref{char_1}) and
 (\ref{char_2}) respectively. More precisely, we set 
\begin{equation*}
\begin{split}
&(X(s),V(s)):=(X(s,t,x,v),V(s,t,x,v)), \quad X(t)=x, \, V(t)=v\\& (Y(s),W(s)):=(Y(s,t,y,w),W(s,t,y,w)), \quad Y(t)=y, \, W(t)=w.
\end{split}\end{equation*}

\noindent Analogously we put
$$
v_{1, \alpha}(t_i)=v_{1, \alpha}, \qquad v_1^\perp(t_i)=v_1^\perp, \qquad 
v_{2, \alpha}(t_i)=v_{2, \alpha}, \qquad v_2^\perp(t_i)=v_2^\perp.
$$

\noindent Then, by the invariance of $f$ along the motion and the Liouville theorem we have,
\begin{equation}
\begin{split}
|E(X(s),s)|\leq & \int dydw \  \frac{f(y,w,s)}{|X(s)-y|^2}=
\int dydw \ \frac{ f(y,w,t)}{|X(s)-Y(s)|^2}.\label{Ei}
\end{split}
\end{equation}
We make a decomposition of the phase space $(y,w)\in \mathbb{R}^3\times \mathbb{R}^3$ in the following way:
\begin{equation}
T_1=\{(y,w): |y-x|\leq 2R(T)\}
\label{T1}
\end{equation}
\begin{equation}
S_1= \{ (y,w): |v_{1,\alpha}-v_{2,\alpha}|\leq {\cal{V}}^{\eta} \}\label{S1}
\end{equation}
\begin{equation}
S_2=\{ (y,w): \ |v_{2}^\perp|\leq {\cal{V}}^{\xi} \}\label{S2}
\end{equation}
\begin{equation}
S_3=(S_1 \cup S_2)^c
\end{equation}
where $\eta$ and $\xi$ are the positive parameters introduced in (\ref{d1}) and in Lemma \ref{lem3} respectively, to be suitably chosen in the following.

 We have
\begin{equation}
|E(X(s),s)|\leq\sum_{j=1}^4{\mathcal{I}}_j(X(s))\label{sum}
\end{equation}
where for any $s\in [t, t+\Delta]$
\begin{equation*}
{\mathcal{I}}_j(X(s))=\int 
_{T_1\cap S_j}dydw \  \frac{f(y,w,t)}{|X(s)-Y(s)|^2}, \quad \quad  j=1,2,3 
\end{equation*}
and 
\begin{equation*}
{\mathcal{I}}_4(X(s))=\int 
_{T_1^c}dydw \  \frac{f(y,w,t)}{|X(s)-Y(s)|^2}.
\end{equation*}

Let us start by the first integral. By the change of variables $(Y(s),W(s))=(\bar{y},\bar{w})$,
and Lemma \ref{lem2} we get
\begin{equation}
{\mathcal{I}}_1(X(s))\leq \int _{T_1'\cap S_1'}d\bar{y}d\bar{w}  \frac{f(\bar{y},\bar{w},s)}{|X(s)-\bar{y}|^2}
\end{equation}
where $T_1'=\{(\bar{y}, \bar{w}): |\bar{y}-X(s)|\leq 4R(T)\}$ and $S_1'= \{ (\bar{y},\bar{w}): |v_{1,\alpha}(s)-\bar{v}_{2,\alpha}|\leq 2 {\cal{V}}^{\eta} \}.$ 
 Now it is:
 \begin{equation}
\begin{split}
{\mathcal{I}}_1(X(s))\leq &\int _{T_1'\cap S_1'\cap \{|X(s)-\bar{y}|\leq \varepsilon\}}d\bar{y}d\bar{w} \frac{\  f(\bar{y},\bar{w},s)}{|X(s)-\bar{y}|^2}+\\&
\int _{T_1'\cap S_1'\cap \{  |X(s)-\bar{y}|>\varepsilon\}}d\bar{y}d\bar{w}\ \frac{f(\bar{y},\bar{w},s)}{|X(s)-\bar{y}|^2}\leq \\& C{\cal{V}}^{2+\eta} \varepsilon+ \int _{T_1'\cap S_1'\cap \{  |X(s)-\bar{y}|>\varepsilon\}}d\bar{y}d\bar{w}\ \frac{f(\bar{y},\bar{w},s)}{|X(s)-\bar{y}|^2}.\end{split}
\label{i11}\end{equation}
Now we give a bound on the spatial density $\rho(\bar{y},s).$ Setting 
$$
\rho_1(y,s)=\int_{S_1'} dw f(y,w,s),
$$
we have:
\begin{equation}
\begin{split}
&\rho_1(y,s)\leq  C {\cal{V}}^{\eta} \int_{|w^{\perp}|\leq a} dw^{\perp}+ \nonumber 
 \int_{|w^{\perp}|> a}dw^{\perp}\int dw_\alpha\ f(y,w,s) \leq   \nonumber \\
& C  a^2 {\cal{V}}^{\eta}+\frac{1}{a^2}\int dw |w|^2f(y,w,s)=C  a^2 {\cal{V}}^{\eta}+\frac{1}{a^2}K(y,s)
\nonumber
\end{split}
\end{equation}
where $K(y,s)=\int dw |w|^2f(y,w,s).$ Minimizing in $a$ we obtain
\begin{equation}
\rho_1(y,s)\leq C {\cal{V}}^{\frac{\eta}{2}} K(y,s)^{\frac12}.\label{K}
\end{equation}

Hence from (\ref{K}) we get
\begin{equation}
\begin{split}
\Bigg(\int _{T_1'}&dy\ \rho_1(y,s)^2\Bigg)^{\frac12}\leq \ C {\cal{V}}^{\frac{\eta}{2}} \Bigg(\int_{T_1'} dy\ K(y,s)\Bigg)^{\frac12}\leq \\ &C
{\cal{V}}^{\frac{\eta}{2}}\sqrt{W(X(s),4R(s),s)} \leq
C{\cal{V}}^{\frac{\eta}{2}}\sqrt{Q},
\end{split}
\label{rho1}
\end{equation}
where we have also applied Lemma \ref{lemR'/R}. Going back to (\ref{i11}), this bound implies
\begin{equation*}
\begin{split}
&{\mathcal{I}}_1(X(s))\leq\\&  C{\cal{V}}^{2+\eta} \varepsilon+\Bigg(\int_{T_1'} dy\rho_1(\bar{y},s)^2\Bigg)^{\frac12}\Bigg(\int_{T_1'\cap  \{|X(s)-\bar{y}|>\varepsilon\}} d\bar{y} \  \frac{1}{|X(s)-\bar{y}|^4}\Bigg)^{\frac12} \\&
\leq C\Big({\cal{V}}^{2+\eta}  \varepsilon+\sqrt{  \frac{   {\cal{V}}^{\eta}  Q }{\varepsilon}   }\Big).
\end{split}
\end{equation*}
By minimizing in $\varepsilon$ we obtain:
\begin{equation}
{\mathcal{I}}_1(X(s))\leq C Q^\frac13 {\cal{V}}^{\frac23(1+\eta)}.  \label{I1}
\end{equation}

\medskip

Now we pass to the term ${\mathcal{I}}_2.$ Proceeding as for the term ${\mathcal{I}}_1,$ defining $S_2'=\{ (\bar{y},
\bar{w}): \ |\bar{v}_2^{\perp}|\leq 2 {\mathcal{V}}^{\xi} \},$  by Lemma \ref{lem3} and the Holder inequality we get:
$$
{\mathcal{I}}_2(X(s))\leq  \int _{T_1'\cap S_2'}d\bar{y}d\bar{w} \frac{\  f(\bar{y},\bar{w},s)}{|X(s)-\bar{y}|^2}\leq 
$$
$$
 \int _{T_1'\cap S_2'\cap \{|X(s)-\bar y|\leq \varepsilon\}}d\bar{y}d\bar{w} \frac{\  f(\bar{y},\bar{w},s)}{|X(s)-\bar{y}|^2}+
 \int _{T_1'\cap \{  |X(s)-\bar{y}|>\varepsilon\}}d\bar{y} \frac{  \rho(\bar{y},s)}{|X(s)-\bar{y}|^2}\leq $$
$$
C {\mathcal{V}}^{1+2\xi}\varepsilon+
\left(\int_{T_1'}d\bar{y} \ \rho(\bar{y},s)^{\frac53}\right)^{\frac35}\left(\int_{ \{|X(s)-\bar{y}|> \varepsilon\}} d\bar{y} \  \frac{1}{|X(s)-\bar{y}|^5}\right)^{\frac25}.
$$
At this point we need a classical estimate, whose proof can be found e. g. in \cite{R3}.
For any $\mu\in \mathbb{R}$ and any positive number $R$ it is:
\begin{equation}
\int_{|\mu -x|\leq R}dx \, \rho^N(x,t)^{\frac53}\leq CW^N(\mu,R,t).\label{5/3}
\end{equation}
This, together with Proposition \ref{propo},  imply
$$
{\mathcal{I}}_2(X(s))\leq C{\mathcal{V}}(t)^{1+2\xi}\varepsilon+C Q^\frac35 \varepsilon^{-\frac45}.
$$
By minimizing in $\varepsilon$ we get:
\begin{equation}
{\mathcal{I}}_2(X(s))\leq C {\mathcal{V}}^{\frac49(1+2\xi)} Q^\frac13.
\label{I2}
\end{equation}

\medskip

Now we estimate ${\mathcal{I}}_3(X(s)).$ 

 We cover $ T_1\cap S_3$ by means of the sets
$A_{h,k}$ and  $B_{h,k}$, with 
${k=0,1,2,...,m}$ and ${h=1,2,...,m'}$, defined in the following way:
\begin{equation}
\begin{split}
A_{h,k}=\big\{ &(y,w): \ h {\cal{V}}^{\eta}< |v_{1,\alpha}-v_{2,\alpha}|\leq (h+1) {\cal{V}}^{\eta},\\& \alpha_{k+1}< |v_{2}^\perp|\leq \alpha_k, \  |X(s)-Y(s)|\leq l_{h,k}\big\}
\end{split}\label{Ak}
\end{equation}
\begin{equation}
\begin{split} 
B_{h,k}=\big\{ &(y,w):\ h {\cal{V}}^{\eta}< |v_{1 ,\alpha}-v_{2,\alpha}|\leq (h+1) {\cal{V}}^{\eta},\\& \alpha_{k+1}< |v_{2}^\perp|\leq \alpha_k, \  |X(s)-Y(s)|> l_{h,k}\big\}
\end{split}\label{Bk}
\end{equation}
where:
\begin{equation}
\alpha_k=\frac{{\cal{V}}}{2^k} \quad \quad l_{h,k}=\frac{2^{2k} }{h{\cal{V}}^{\beta+\eta} },
\label{al}
\end{equation}
with $\beta>0$ chosen later.
Hence we have
\begin{equation}
{\cal{I}}_3(X(s))\leq\sum_{h=1}^{m'} \sum_{k=0}^m\left({\cal{I}}_3'(h,k)+{\cal{I}}_3''(h,k)\right) \label{23}
\end{equation}
being
\begin{equation}
{\cal{I}}_3'(h,k)=\int_{T_1\cap A_{h,k}} \frac{f(y,w,t)}{|X(s)-Y(s)|^2} \, dydw
\end{equation}
and
\begin{equation}
{\cal{I}}_3''(h,k)=\int_{T_1\cap B_{h,k}} \frac{f(y,w,t)}{|X(s)-Y(s)|^2} \, dydw.
\end{equation}
Hence, recalling that we are in $S_3,$ it is
\begin{equation}
m\leq C \log {\cal{V}},  \qquad m'\leq C {\cal{V}}^{1-\eta}.\label{par}
\end{equation} 
By adapting Lemma \ref{lem2} and Lemma \ref{lem3} to this context it can be seen that $\forall \ (y,w)\in A_{h,k}$ it holds:
\begin{equation}
(h-1) {\cal{V}}^{\eta}\leq |v_{1,\alpha}(s)-v_{2,\alpha}(s)|\leq (h+2) {\cal{V}}^{\eta},   \label{lem31}
\end{equation}
and
\begin{equation}
\frac{\alpha_{k+1}}{2}\leq |v_{2}^\perp(s)|\leq 2\alpha_k.\label{lemperp1}
\end{equation}
Thus, putting
\begin{equation}\begin{split}
A_{h,k}'=&\big\{(\bar{y},\bar{w}):  \ (h-1){\cal{V}}^{\eta}\leq|v_{1,\alpha}(s)-\bar{v}_{2,\alpha}|\leq(h+2){\cal{V}}^{\eta}  \\&
\frac{\alpha_{k+1}}{2}\leq |\bar{v}_{2}^\perp|\leq 2\alpha_k,  
\ |X(s)-\bar{y}|\leq l_{h,k}  \big\},
\end{split}
\label{Akk}
\end{equation}
we have
\begin{equation}
{\cal{I}}_3' (h,k)\leq
\int_{T_1'\cap A_{h,k}'}\frac{f(\bar{y},\bar{w}, s)}{|X(s)-\bar{y}|^2} \, d\bar{y}d\bar{w}\label{int3}.
\end{equation}
The choice of the parameters $\alpha_k$ and $l_{h,k}$ made in (\ref{al}) implies that
\begin{equation}
\begin{split}
{\cal{I}}_3' (h,k) \leq &
 \ C \, l_{h,k}\int_{A_{h,k}'} \, d\bar{w}  \leq 
 \,C \, l_{h,k} \alpha_{k}^2  {\cal{V}}^{\eta} 
\leq C\frac{{\cal{V}}^{2-\beta}}{h} 
\end{split}
\end{equation}

from which, by (\ref{par}), it results
\begin{equation}
\sum_{h=1}^{m'} \sum_{k=0}^m{\cal{I}}_3'(h,k)\leq C {\cal{V}}^{2-\beta} \sum_{k=0}^m \sum_{h=1}^{m'}\frac{1}{h}\leq 
C {\cal{V}}^{2-\beta}  \log^2{\cal{V}}.
\label{i3}\end{equation}

Now we pass to ${\mathcal{I}}_3''(h,k).$
  Setting
 \begin{equation}
\begin{split}
C_{h,k}=\big\{& (y,w): \, h {\cal{V}}^{\eta}\leq |v_{1,\alpha}-v_{2,\alpha}|\leq (h+1) {\cal{V}}^{\eta},\\& \alpha_{k+1}\leq |v_{2}^\perp|\leq \alpha_k\big\},
\end{split}
\label{Bhk}
\end{equation}
 we have:
\begin{equation}
\begin{split}
&\int_{t}^{t+\Delta} {\mathcal{I}}_3''(h,k)\ ds\leq
\\&\int_{t}^{t+\Delta}ds\int_{T_1\cap C_{h,k}}dydw \ \frac{ f(y,w,t)}{|X(s)-Y(s)|^2}\, 
\chi(|X(s)-Y(s)|>l_{h,k})  \leq
\\&\int_{T_1\cap C_{h,k}} f(y, w, t) \left( \int_{t}^{t+\Delta} \frac{\chi(|X(s)-Y(s)|>l_{h,k})}{|X(s)-Y(s)|^2} \, ds \right)\, dy dw.
\end{split}
\label{ik}
\end{equation}
 Lemma \ref{lem4},
putting $a = \frac{4 \, l_{h,k}{\cal{V}}^{-\eta}}{h },$  gives us
\begin{equation}
\begin{split}
&\int_{t}^{t+\Delta} \frac{\chi(|X(s)-Y(s)|>l_{h,k})}{|X(s)-Y(s)|^2} \, ds 
\leq  \,\,   \\
&\int_{\{ s: |s-t_0|\leq a \}} \frac{\chi(|X(s)-Y(s)|>l_{h,k})}{|X(s)-Y(s)|^2} \, ds \, +\\
&\int_{\{ s: |s-t_0| > a \}} \frac{\chi(|X(s)-Y(s)|>l_{h,k})}{|X(s)-Y(s)|^2} \, ds   \leq  \\
&\frac{1}{l_{h,k}^2} \int_{\{ s: |s-t_0|\leq a \}} \, ds
+\frac{64 {\cal{V}}^{-2\eta}}{h^2} \int_{\{ s: |s-t_0| > a \}} \frac{1}{ |s-t_0|^2} \, ds \leq \\
& \quad \frac{2 a}{l_{h,k}^2} + \frac{C \,{\cal{V}}^{-2\eta}}{h^2 } \int_a^{+\infty} \frac{1}{s^2} \, ds
= \frac{C \,{\cal{V}}^{-\eta}}{ l_{h,k}h }.
\end{split}
\label{eq1}\end{equation}
Furthermore we have   
\begin{equation}
\begin{split}
\int_{T_1\cap C_{h,k}} f(y, w, t)\, dydw&\leq \frac{C}{\alpha_k^2}\int_{T_1\cap C_{h,k}} w^2 f(y, w, t) \, dydw, \\& \end{split}
\label{eq2}
\end{equation}
so that, by Lemma \ref{lemR'/R}
\begin{equation}
\begin{split}
\sum_{h=1}^{m'} \sum_{k=0}^m &\int_{T_1\cap C_{h,k}} w^2 f(y, w,t) \, dydw
\leq C \int_{T_1} K(y,t) \, dy \leq  \\
& CW(X(t), 3R(t), t) \leq C Q
\end{split}
\label{i5}
\end{equation}
where $K$ is the function introduced in (\ref{K}).

Taking into account (\ref{al}), by (\ref{ik}), (\ref{eq1}), (\ref{eq2}) and (\ref{i5}) we get:
\begin{equation}
\sum_{h=1}^{m'} \sum_{k=0}^m\int_{t}^{t+\Delta} {\mathcal{I}}_3''(h,k)\, ds\leq  
C  Q  {\cal{V}}^{\beta-2}. 
\label{wer} 
\end{equation}
By multiplying and dividing by $\Delta$ defined in (\ref{d1}) we obtain, 
\begin{equation}
\sum_{h=1}^{m'} \sum_{k=0}^m\int_{t}^{t+\Delta} {\mathcal{I}}_3''(h,k)\, ds \leq \, 
C  Q  {\cal{V}}^{\beta-\eta} \log \mathcal{V} \, \Delta.
\label{i3''}
\end{equation}
We minimize the exponents of $\mathcal{V}$ appearing in (\ref{I1}), (\ref{I2}),  (\ref{i3}) and (\ref{i3''}).
Recalling that $R^N(t)\leq C(1+\mathcal{V}^N(t)),$ by Proposition \ref{propo} and  Lemma \ref{Qbound}
 we obtain
\begin{equation}
\sum_{j=1}^3\int_{t}^{t+\Delta} {\mathcal{I}}_j(X(s))\, ds\leq 
 C \Delta {\mathcal{V}}^{\frac{15}{7}-\frac37 \alpha} \log^2{\mathcal{V}}.
 \label{ave}
\end{equation}
From here we can fix the previously introduced parameters as
\begin{equation}
\beta=\frac37\alpha-\frac17, \qquad \eta=\frac57 -\frac17\alpha, \qquad \xi=\frac{11}{14}-\frac{3}{28}\alpha.
\label{params}
\end{equation}

The last term ${\mathcal{I}}_4(X(s))$ can be bounded by the same procedure we used in  \cite{CCM16},
to conclude that it is bounded by a constant.

Hence by (\ref{Ei}), (\ref{sum}) and (\ref{ave}) we have
\begin{equation*}
 \int_{t}^{t+\Delta} |E(X(s),s)| \, ds\leq
C \Delta {\mathcal{V}}^{\frac{15}{7}-\frac37 \alpha} \log^2{\mathcal{V}} 
\end{equation*}
and then Proposition \ref{media}.
\qed

\subsection*{Proof of Lemma \ref{lem2}}

We consider the third equation of (\ref{tor_coor}), and we note that along this direction $\alpha$ the magnetic field
does  not act.
We get
\begin{equation}
\frac{d}{dt}(\dot\alpha \, r)= - \dot\alpha \, \dot r - \frac{v_\theta^2}{R+r\cos\alpha} \sin\alpha + E_\alpha,
\label{alpha_eq}
\end{equation}
and integrating in time,
\begin{equation}
v_\alpha(t) - v_\alpha(t') = \int_{t'}^t \left(- \dot\alpha(s) \, \dot r(s) - \frac{v_\theta^2(s)}{R+r(s)\cos\alpha(s)} 
\sin\alpha(s) + E_\alpha(s) \right) ds.
\label{a_eq2}
\end{equation}
The following bounds are fulfilled:  $|\dot r| = |v_r| \leq \mathcal{V}$,  $|v_\theta| \leq \mathcal{V}$,  $|E_\alpha| \leq C \mathcal{V}^{\frac43} Q^\frac13$ (recalling Proposition \ref{prop2}).  
For the term $\dot\alpha = \frac1r v_\alpha$, since it is $r>r_0$ (we are outside the torus)
we have $|\dot\alpha|\leq C \mathcal{V}$.
Therefore
\begin{equation}
|v_\alpha(t)| \leq |v_\alpha(t')| + C ( \mathcal{V}^2 + \mathcal{V}^{\frac43+\frac{3-\alpha}{3}})(t-t'),
\end{equation}
and  we get, for any $t\in[t',t'+\Delta]$,
$$
|r_1(t)\dot\alpha_1(t)-r_2(t)\dot\alpha_2(t)| = |v_{1,\alpha}(t)-v_{2,\alpha}(t)|\leq 
$$
$$
|v_{1,\alpha}(t')-v_{2,\alpha}(t')|+ C 
(\mathcal{V}^2 +\mathcal{V}^{\frac73-\frac{\alpha}{3}}) \Delta \leq 
$$
$$
\mathcal{V}^{\eta} + C \frac{\mathcal{V}^\eta}{\log \mathcal{V}} \leq 2 \mathcal{V}^{\eta}
$$
by the bound on $\alpha\in(\frac{8}{3}, 3)$, and being $\mathcal{V}$ sufficiently large.

The second statement can be proved in a similar manner:
$$
|v_{1,\alpha}(t)-v_{2,\alpha}(t)|\geq |v_{1,\alpha}(t')-v_{2,\alpha}(t')|-
C (\mathcal{V}^2 +\mathcal{V}^{\frac73-\frac{\alpha}{3}}) \Delta \geq
$$
$$
\mathcal{V}^{\eta} - C \frac{\mathcal{V}^\eta}{\log \mathcal{V}} \geq \frac12 \mathcal{V}^{\eta}
$$
for $\mathcal{V}$ sufficiently large.

\qed

\bigskip

\subsection*{Proof of Lemma \ref{lem3}}

The definition of $B$ and eq. (\ref{baa}) imply that
 \begin{equation}
\begin{split}
\frac{d}{dt}\left[v^{\perp}_1(t)\right]^2=2 v^{\perp}_1(t)\cdot E_{\perp}(t), \label{perp}\end{split}
\end{equation}
with $E_{\perp} = \sqrt{E_r^2 + E_\theta^2}$.
The thesis can be proved by contradiction. Assume that there exists a time interval $[t_1,t_2]\subset [t',t'+\Delta)$ such that 
$|v_1^{\perp}(t_1)|= \mathcal{V}^\xi$,   $|v_1^{\perp}(t_2)|= 2\mathcal{V}^\xi$  and  
$\mathcal{V}^\xi< |v_1^{\perp}(t)|< 2\mathcal{V}^\xi \ \ \forall t\in (t_1,t_2).$ 
Then from (\ref{perp})  it follows, by Proposition \ref{prop2} and (\ref{d1}):
\begin{equation}
\begin{split}
|v_1^{\perp}(t_2)|^2\leq \ |v_1^{\perp}(t_1)|^2&+2\int_{t_1}^{t_2} \ ds \ |v_1^{\perp}(s)|\,
|E_\perp(s)| \leq\\&
\mathcal{V}^{2\xi}+4\mathcal{V}^\xi\int_{t_1}^{t_2} ds \ | E(s)|\leq \\
&\mathcal{V}^{2\xi}+4 \mathcal{V}^{\xi} \Delta \, C \, \mathcal{V}^{\frac43} Q^\frac13\leq  \\
&\mathcal{V}^{2\xi} + C \frac{\mathcal{V}^{\xi+\frac13-\frac{\alpha}{3}+\eta}}{\log \mathcal{V}}<2 \mathcal{V}^{2\xi}.
\end{split}\label{A1}
\end{equation}
Hence we get an absurd, which proves the thesis. We have used in the last inequality  $\xi \geq \frac13 -\frac{\alpha}{3}+\eta$,
which holds by (\ref{params}) and since $\alpha \in (\frac83, 3)$. 

We pass now to prove (\ref{L4}).  We repeat the argument: assume that there exists a time interval $[t_1,t_2]\subset [t',t'+\Delta)$ such that 
$|v_1^{\perp}(t_1)|= \mathcal{V}^\xi$,    $|v_1^{\perp}(t_2)|= \frac{\mathcal{V}^\xi}{2}$  and  $\frac{\mathcal{V}^\xi}{2}< |v_1^{\perp}(t)|<  \mathcal{V}^\xi
\ \ \forall t\in (t_1,t_2).$ Then from (\ref{perp})
it follows, by Proposition \ref{prop2} and (\ref{d1}):
\begin{equation}
\begin{split}
|v_1^{\perp}(t_2)|^2\geq &\ |v_1^{\perp}(t_1)|^2-2\int_{t_1}^{t_2} \ ds \ |v_1^{\perp}(s)| \,
|E_\perp(s)| \geq\\&
\mathcal{V}^{2\xi}-2\mathcal{V}^\xi\int_{t_1}^{t_2} ds \ | E(s)|\geq \\
&\mathcal{V}^{2\xi}- C \frac{\mathcal{V}^{\xi+\frac13-\frac{\alpha}{3}+\eta}}{\log \mathcal{V}}>\frac{\mathcal{V}^{2\xi}}{2}.
\end{split}
\end{equation}
In this case again the contradiction proves the thesis.

\qed

\bigskip

\subsection*{Proof of Lemma \ref{lem4}}

 Defining
$$
\Omega(t) =  \int_{t'}^t \left[ v_{1,\alpha}(s) - v_{2,\alpha}(s) \right]   ds  + \lambda(t'),
$$
with
$$
\lambda(t')=r_1(t')\alpha_1(t') - r_2(t')\alpha_2(t'),
$$
let   $t_0\in [t',t'+\Delta]$ be the time at which 
$
\left| \Omega(t) \right|
$
has the minimum value.

\noindent Let us introduce also the function
$$
\bar{\Omega}(t)=\Omega(t_0)+ \dot{\Omega}(t_0)(t-t_0).
$$
By the choice of the magnetic field, it does not act on the  $\alpha$-component (in toroidal coordinates) of the velocity,
hence by (\ref{alpha_eq}) it is
\begin{equation}
\begin{split}
\frac{d^2}{d t^2}\left(\Omega(t)-\bar{\Omega}(t) \right)= - &\dot\alpha_1(t) \, \dot r_1(t) 
- \frac{[v_{1,\theta}(t)]^2 \sin\alpha_1(t)}{R+r_1(t)\cos\alpha_1(t)}  + E_\alpha(X(t),t) \, +  \\
&\dot\alpha_2(t) \, \dot r_2(t) 
+ \frac{[v_{2,\theta}(t)]^2 \sin\alpha_2(t)}{R+r_2(t)\cos\alpha_2(t)}  - E_\alpha(Y(t),t),     \\
&\Omega(t_0)=\bar{\Omega}(t_0), \quad  \dot{\Omega}(t_0)=\dot{\bar{\Omega}}(t_0),
\label{gam}
\end{split}
\end{equation}
introducing the notation $E_\alpha(X(t),t)$ for the $\alpha$-component of the electric field acting on the characteristic $1$,
and $E_\alpha(Y(t),t)$  for the $\alpha$-component of the electric field acting on the characteristic $2$.
Then we obtain
\begin{equation}
\begin{split}
\Omega(t)=\bar{\Omega}(t) \,&+  \nonumber \\
\int_{t_0}^tds\int_{t_0}^s d\tau\, &\Big[  - \dot\alpha_1(\tau) \, \dot r_1(\tau) 
- \frac{[v_{1,\theta}(\tau)]^2 \sin\alpha_1(\tau)}{R+r_1(\tau)\cos\alpha_1(\tau)}  + E_\alpha(X(\tau),\tau)  \nonumber \\
&+\dot\alpha_2(\tau) \, \dot r_2(\tau) 
+ \frac{[v_{2,\theta}(\tau)]^2 \sin\alpha_2(\tau)}{R+r_2(\tau)\cos\alpha_2(\tau)}  - E_\alpha(Y(\tau),\tau)  \nonumber   \Big]
\end{split}
\end{equation}
and proceeding in the same way as after (\ref{a_eq2})  (being  $r > r_0$ outside the torus),
\begin{equation}
\begin{split}
\int_{t_0}^tds\int_{t_0}^s d\tau\, &\left|  - \dot\alpha_1(\tau) \, \dot r_1(\tau) 
- \frac{[v_{1,\theta}(\tau)]^2 \sin\alpha_1(\tau)}{R+r_1(\tau)\cos\alpha_1(\tau)}  + E_\alpha(X(\tau),\tau) \right. \nonumber \\
&+\left. \dot\alpha_2(\tau) \, \dot r_2(\tau) 
+ \frac{[v_{2,\theta}(\tau)]^2 \sin\alpha_2(\tau)}{R+r_2(\tau)\cos\alpha_2(\tau)}  - E_\alpha(Y(\tau),\tau)  \nonumber \right| \leq
\nonumber  \\
& C \mathcal{V}^2  |t-t_0|^2 \leq C \mathcal{V}^2  |t-t_0| \Delta   \leq C \frac{\mathcal{V}^\eta}{\log \mathcal{V}} |t-t_0|.  \nonumber
\end{split}
\end{equation}
 Therefore
\begin{equation}
 |\Omega(t)|\geq |\bar{\Omega}(t)|-\frac{ \mathcal{V}^{\eta}}{4}|t-t_0|.
\label{z}
\end{equation}
It results:
$$
|\bar{\Omega}(t)|^2=|\Omega(t_0)|^2+2\Omega(t_0)\cdot \dot{\Omega}(t_0)(t-t_0)+|\dot{\Omega}(t_0)|^2 |t-t_0|^2,
$$
and we notice that that $\Omega(t_0) \cdot \dot{\Omega}(t_0) (t-t_0)\geq 0.$ In fact, if $t_0 \in(t',t'+\Delta)$ then $\dot{\Omega}(t_0)=0$ while if $t_0=t'$ or $t_0=t'+\Delta$ the product $\Omega(t_0)\cdot \dot{\Omega}(t_0) (t-t_0)\geq 0$.
Hence
$$
|\bar{\Omega}(t)|^2\geq |\dot{\Omega}(t_0)|^2 |t-t_0|^2.
$$
By Lemma \ref{lem2}  (adapted to this context with a factor $h\geq 1$), since $t_0\in [t',t'+\Delta]$ it is
$$
|\dot{\Omega}(t_0)|\geq h \frac{\mathcal{V}^{\eta}}{2},
$$
therefore
$$
|\bar{\Omega}(t)|\geq h \frac{\mathcal{V}^{\eta}}{2}|t-t_0|,
$$
and  by (\ref{z}) we get
\begin{equation}
 |\Omega(t)|\geq h \frac{\mathcal{V}^{\eta}}{4}|t-t_0|.
 \label{gam_t}
\end{equation}
Finally, we  achieve  Lemma \ref{lem4} by the following bound 
\begin{equation}
\left| \int_{t'}^t \left[ v_{1,\alpha}(s) - v_{2,\alpha}(s) \right] ds  + \lambda(t') \right| \leq 2 |X(t)-Y(t)|
\label{cord}
\end{equation}
which can be explained as follows:  the left hand side of (\ref{cord}) is the separation along the $\alpha$-coordinate (length of arc, if it is identically 
$r_1\equiv r_2$),
and in the worst case ($r_1\equiv r_2$), since we are  looking at  small  lengths, 
the double of the chord is greater than the length of the corresponding arc, for angles smaller than $\pi$.

\qed

\bigskip
\bigskip

\end{document}